\documentclass[a4paper,onecolumn,11pt]{quantumarticle}
\pdfoutput=1
\usepackage[utf8]{inputenc}
\usepackage[english]{babel}
\usepackage[T1]{fontenc}
\usepackage{amsmath,amssymb,amsthm,mathtools}
\usepackage[margin=1in]{geometry}
\usepackage{microtype}
\usepackage{booktabs,longtable,array}
\usepackage{enumitem}
\usepackage[numbers,sort&compress]{natbib}
\usepackage[colorlinks=true,allcolors=blue]{hyperref}
\usepackage{xurl}
\usepackage{bm}

\newtheorem{theorem}{Theorem}
\newtheorem{lemma}{Lemma}
\newtheorem{proposition}{Proposition}
\newtheorem{corollary}{Corollary}
\theoremstyle{definition}
\newtheorem{definition}{Definition}

\newcommand{\R}{\mathbb R}
\newcommand{\C}{\mathbb C}
\newcommand{\N}{\mathbb N}
\newcommand{\Sep}{\operatorname{Sep}}
\newcommand{\Bisep}{\operatorname{Bisep}}
\newcommand{\Tr}{\operatorname{Tr}}
\newcommand{\conv}{\operatorname{conv}}
\newcommand{\Hess}{\nabla^2}
\newcommand{\Realif}{\mathcal R}
\newcommand{\ket}[1]{|#1\rangle}
\newcommand{\bra}[1]{\langle#1|}
\newcommand{\proj}[1]{\ket{#1}\bra{#1}}
\newcommand{\Ran}{\operatorname{Ran}}

\DeclareMathOperator{\re}{Re}
\DeclareMathOperator{\im}{Im}

\title{Separability rigidity of Gaussian states under positive domination}

\author{Evgeny Shchukin}
\address{Johannes-Gutenberg University of Mainz, Institute of Physics, Mainz, Germany}
\email{shchukin@uni-mainz.de}

\author{Peter van Loock}
\address{Johannes-Gutenberg University of Mainz, Institute of Physics, Mainz, Germany}
\email{loock@uni-mainz.de}

\date{}
\begin{document}

\begin{abstract}
We prove a separability rigidity theorem for finite-mode Gaussian states. Let
$\rho_G$ be Gaussian and let $\mathcal I$ be a fixed partition of the modes. If
$\rho_G$ dominates a nonzero positive operator $T$ that is separable across
$\mathcal I$, $0\neq T\preccurlyeq \rho_G$, then $\rho_G$ itself is separable
across $\mathcal I$. The dominated operator is arbitrary and need not be
Gaussian. As consequences, convex mixing among finitely many separability
partitions creates no new Gaussian states: a Gaussian state belonging to such a
convex class is already separable across one fixed partition. In particular,
full inseparability and genuine multipartite entanglement coincide for
finite-mode Gaussian states, and identical copies of a biseparable Gaussian
state cannot activate genuine multipartite entanglement. We also show that the
implication cannot be reversed by constructing a fully separable Gaussian state
that dominates a positive multiple of a genuine multipartite entangled state.
\end{abstract}

\maketitle

\section{Introduction}

Entanglement between every partition of a collection of modes into two groups is
called \emph{full inseparability}. A stronger notion, \emph{genuine multipartite
entanglement} (GME), rules out mixtures of states separable across possibly
different divisions. For general mixed states these notions differ: the
separable division can change from one component of a mixture to another.

Leskovjanov\'a et al.\ conjectured that this distinction disappears for Gaussian
states \cite{Leskovjanova2026}. Their numerical investigation concerns Gaussian
states whose covariances also occur in non-Gaussian biseparable states. The
distinction between a state and its covariance is essential: a covariance
biseparability condition is not sufficient even under a Gaussian promise
\cite{Baksova2025,Leskovjanova2026}.

We prove the conjectured equivalence by establishing a stronger statement. If
even a nonzero portion of a Gaussian state is separable across a specified
partition, in the precise positive-operator sense below, the whole Gaussian
state is separable across that partition. This statement concerns arbitrary
separable portions, not only Gaussian ones. The familiar Gaussian covariance
criterion and its construction by classical random displacements are established
results \cite{WernerWolf2001}; an operational bipartite criterion was
developed in \cite{Giedke2001}. Our argument derives the required product
covariance from an arbitrary dominated separable operator.

There are two infinite-dimensional issues even with finitely many modes. First,
separable operators need not admit countable pure-product decompositions
\cite{Holevo2005}. We use finite-mixture approximation and trace-norm continuity
of Husimi derivatives instead. Second, biseparability is a \emph{closed} convex
hull. Positivity and uniform trace tails supply the compactness needed to
recover an exact finite decomposition from that closure.

\section{Setting and main results}\label{sec:setting}

Consider $n$-mode quantum states. A $k$-partition $\mathcal{I}$ of the modes is
any grouping of these modes into $k$ non-empty disjoint sets
\begin{equation}
      \mathcal{I} = \{I_1,\ldots,I_k\}, \quad
      I_1 \cup \ldots \cup I_k = [n].
\end{equation}
We define the space $\Sep(\mathcal{I})$ of $\mathcal{I}$-separable states as
follows:
\begin{definition}\label{def:sep} For a fixed partition
$\mathcal{I}=\{I_1,\ldots,I_k\}$ of $n$ modes, let
\begin{equation}\label{eq:SepI}
 \Sep(\mathcal{I})=
 \overline{\conv}
 \{\hat{\varrho}_1 \otimes \ldots \otimes \hat{\varrho}_k:
 \ \hat{\varrho}_j\ \text{is a state on modes}\ I_j\},
 \end{equation}
 where the closure is taken with respect to the trace norm
 $\|T\|_1=\Tr\sqrt{T^\dagger T}$. For $n\geqslant 2$, let $\mathfrak{I}$ be an
 arbitrary nonempty collection of arbitrary partitions, not necessarily with the
 same number of parts. Define the set of all $\mathfrak{I}$-separable states as
 \begin{equation}\label{eq:Sep-I}
    \Sep(\mathfrak{I}) = \conv\biggl(\bigcup_{\mathcal{I}\in\mathfrak{I}}\Sep(\mathcal{I})\biggr).
 \end{equation}
Let $\Pi_n$ be the set of nontrivial bipartitions, identifying a cut with its
complement. Then the set of biseparable states is defined as
\begin{equation}\label{eq:Bisep}
 \Bisep_n = \Sep(\Pi_n).
\end{equation}
A state is fully inseparable if it lies in no $\Sep(\pi)$ for all $\pi \in
\Pi_n$, and is GME if it lies outside $\Bisep_n$. The outer trace-norm closure
in Eq.~\eqref{eq:Sep-I} can in fact be omitted: because the number of partitions
is finite, the algebraic convex hull of the fixed-cut separable sets is already
trace-norm closed. This is proved in Appendix C.
\end{definition}

A nonzero positive trace-class operator $T$ is called $\mathcal I$-separable
if $T/\operatorname{Tr}T\in\operatorname{Sep}(\mathcal I)$. According to the
definition of $\mathfrak{I}$-separability, an $n$-mode state $\rho$ is
$\mathfrak{I}$-separable iff it can be written in the form
\begin{equation}\label{eq:rho-bisep}
      \rho = \sum_{\mathcal{I}\in\mathfrak{I}} p_\mathcal{I} \rho_\mathcal{I}, \quad 
      p_\mathcal{I} \geqslant 0, \quad
      \sum_{\mathcal{I}\in\mathfrak{I}} p_\mathcal{I} = 1, \quad 
      \rho_\mathcal{I} \in \Sep(\mathcal{I}).
\end{equation}
In contrast with this decomposition, one cannot remove the closure in
Eq.~\eqref{eq:SepI} since there are separable states that do not admit a
countable decomposition. 

Our quadratures are $r=(x_1,\ldots, x_n, p_1,\ldots, p_n)^T$. Their commutation relations read as 
\begin{equation}
 [r_j, r_k]=i\Omega_{jk},\quad
 \Omega=\begin{pmatrix}0&E\\-E&0\end{pmatrix},\label{eq:ccr}
\end{equation}
and covariance is given by
\begin{equation}
    \gamma_{jk}=\tfrac12\langle\{\Delta\hat r_j,\Delta\hat r_k\}\rangle,
 \quad \gamma_{\rm vac}=\tfrac{1}{2}E.
\end{equation}
We use the characteristic function
\begin{equation}
 \chi_{\varrho}(\xi)=\Tr(\varrho e^{i\xi^Tr}).
\end{equation}
A state $\varrho_G$ is called Gaussian of mean $\mu$ and covariance
$\gamma$ if its characteristic function reads as
\begin{equation}
 \chi_{\varrho_G}(\xi)=\exp(-\tfrac12\xi^T\gamma\xi+i\xi^T\mu).
 \label{eq:characteristic}
\end{equation}
The physical covariance condition is $\gamma+i\Omega/2\succcurlyeq 0$.
We use the standard finite-mode Gaussian existence and purity
characterization, with its exact reference and normalization specified
in Lemma~\ref{lem:cayley}. Gaussian states here include pure states and
states with pure normal modes.

The main result of this work is to prove the following theorem.
\begin{theorem}\label{thm:component} Let $\rho_G$ be a Gaussian state with
covariance $\gamma$. Suppose that, for a fixed partition $\mathcal I$, there is
a nonzero positive (not necessarily normalized) operator $T$ such that $T$ is
separable across $\mathcal I$ and $T \preccurlyeq \rho_G$. Then $\rho_G$ is
separable across $\mathcal I$.
\end{theorem}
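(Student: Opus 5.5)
The plan is to reduce the statement to the Werner--Wolf covariance criterion \cite{WernerWolf2001}. A Gaussian state with covariance $\gamma$ is $\mathcal I$-separable iff there are physical covariances $\gamma_j$ on the blocks $I_j$, meaning $\gamma_j+i\Omega_{I_j}/2\succcurlyeq 0$, with $\gamma\succcurlyeq\gamma_1\oplus\cdots\oplus\gamma_k$. The sufficiency direction is realized by classical random displacements of the product Gaussian state. So the whole task is to extract such a block-diagonal physical lower bound from the hypothesis $0\neq T\preccurlyeq\rho_G$. The mean $\mu$ plays no role, so I set $\mu=0$.

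\textbf{Step 1 (reduction to a pure product vector).} Since $T/\Tr T\in\Sep(\mathcal I)$ is a trace-norm limit of finite mixtures of products, I first produce a single product operator $\sigma_1\otimes\cdots\otimes\sigma_k$ with $c\,\sigma_1\otimes\cdots\otimes\sigma_k\preccurlyeq\rho_G$ for some $c>0$. The inequality $\preccurlyeq$ does not pass termwise to approximants. I would therefore work instead with the Husimi function $Q_X(\alpha)=\bra{\alpha}X\ket{\alpha}/\pi^n$. It is positive, trace-norm continuous together with its derivatives, and for a finite mixture it is the corresponding mixture of product functions. Domination gives $Q_T\le Q_{\rho_G}$ pointwise. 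Approximating $T$ in trace norm by finite mixtures, and using that a nonnegative mixture bounded above forces each term bounded above, I expect to obtain a product term whose Husimi function is controlled by $C\,Q_{\rho_G}$. This control should hold up to errors that vanish in the limit. Passing to an eigenvector of each $\sigma_j$, I aim for a product vector $\psi=\psi_1\otimes\cdots\otimes\psi_k$ with $Q_\psi\le C\,Q_{\rho_G}$. A pure state is the most favourable case for the next step.

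\textbf{Step 2 (Bargmann/plurisubharmonic growth comparison).} I write $Q_{\psi_j}(\alpha_j)=|F_j(\bar\alpha_j)|^2e^{-|\alpha_j|^2}/\pi^{|I_j|}$ with $F_j$ entire (Bargmann). The function $\log Q_{\rho_G}$ is the quadratic form $-\tfrac12\alpha^T(\gamma+\tfrac12E)^{-1}\alpha$ plus a constant, in real coordinates. The domination then reads
\begin{equation*}
\sum_{j}\bigl(\log|F_j|^2\bigr)(\bar\alpha_j)\;\le\; q(\alpha)+\text{const},\qquad
q(\alpha)=|\alpha|^2-\tfrac12\alpha^T\bigl(\gamma+\tfrac12E\bigr)^{-1}\alpha .
\end{equation*}
The left side is a sum of plurisubharmonic functions in separate block variables. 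I then rescale $\alpha\mapsto t\alpha$ and take $t\to\infty$, using the order-$\le 2$ growth indicators of the $F_j$. This should give block quadratic forms $q_j$ with $\sum_j q_j(\alpha_j)\le q(\alpha)$. Each $q_j$ is the indicator of an entire function of order $2$, so $q_j$ has the shape $|\alpha_j|^2-\tfrac12\alpha_j^T(\gamma_j+\tfrac12E)^{-1}\alpha_j$. The same indicator structure should force $\gamma_j$ to satisfy the uncertainty relation, because a quadratic growth bound on a Bargmann function is exactly the Husimi tail of a physical Gaussian. Translating back, $q$ dominates a direct sum of such forms. Inverting then yields $\gamma+\tfrac12E\succcurlyeq\bigoplus_j(\gamma_j+\tfrac12E)$, hence $\gamma\succcurlyeq\bigoplus_j\gamma_j$.

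\textbf{Step 3 (conclusion).} I apply the Werner--Wolf construction, using Lemma~\ref{lem:cayley} for existence of the Gaussian states with covariances $\gamma_j$. Since $\gamma-\bigoplus_j\gamma_j\succcurlyeq0$, the state $\rho_G$ equals a Gaussian mixture of displaced product Gaussian states. It therefore lies in $\Sep(\mathcal I)$.

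The main obstacle is Step 2. One difficulty is turning an inequality between a sum of block-plurisubharmonic functions and a single quadratic into a block-diagonal quadratic lower bound. Another is that degenerate directions (pure normal modes of $\rho_G$, or non-strict growth) must be handled without losing the physicality of the $\gamma_j$. I would first try Hessian or Laplacian averages of $\log Q$ over large spheres in each block. Trace-norm continuity of Husimi derivatives makes these averages stable under the approximation of Step 1.
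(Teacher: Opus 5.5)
\textbf{There is a genuine gap: Steps 1 and 2 do not go through, and together they carry the whole theorem.} Your Step~3 is fine and coincides with the paper's final step (Werner--Wolf random displacements).

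\textbf{Step 1.} Trace-norm approximation $T_j\to T$ controls Husimi functions only additively and uniformly: $|Q_{T_j}(\alpha)-Q_T(\alpha)|\leqslant\|T_j-T\|_1/\pi^n$. So the approximants satisfy $Q_{T_j}\leqslant Q_{\rho_G}+\epsilon_j$, not $Q_{T_j}\leqslant Q_{\rho_G}$. Since $Q_{\rho_G}$ has Gaussian decay, a constant error carries no information about the tails, and tails are exactly what Step~2 consumes through $t\to\infty$. Isolating a single term makes this worse. The bound becomes $Q_{\psi_{j\ell}}\leqslant(Q_{\rho_G}+\epsilon_j)/w_{j\ell}$, and individual weights may tend to zero. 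By \cite{Holevo2005}, $T$ need not have any countable product decomposition that could supply an exact term. ``Errors that vanish in the limit'' therefore give you no product vector $\psi$ with $Q_\psi\leqslant C\,Q_{\rho_G}$.

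\textbf{Step 2.} You flag this as the main obstacle, and the sketch contains incorrect claims.
\begin{itemize}
\item The radial indicator of an order-2 Bargmann function need not be a quadratic form. For example, $F(z)=\cosh(z^2/4)$, a superposition of two squeezed vacua, gives $\limsup_t t^{-2}\log|F(tz)|^2=\tfrac12|\re(z^2)|$. In several variables you only get a 2-homogeneous plurisubharmonic function.
\item The bound $\limsup_t t^{-2}\sum_j\log|F_j(t\bar\alpha_j)|^2\leqslant q(\alpha)$ controls only $h_1(\alpha_1)+\sum_{j\geqslant2}\underline h_j(\alpha_j)$, where $\underline h_j$ are the $\liminf$ indicators. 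It does not control $\sum_j h_j$.
\item The physicality of the extracted $\gamma_j$ is asserted, not derived.
\end{itemize}

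\textbf{Missing idea: argue locally instead of asymptotically.} This removes both problems.
\begin{itemize}
\item In real coordinates, $e^{-\varepsilon\|r\|^2}q_T/q_G$ attains a positive maximum at a finite point. There $\Hess\log q_T\preccurlyeq-B+2\varepsilon E$, with $B=(\gamma+E/2)^{-1}$.
\item For a pure product $\psi$, holomorphy of the Bargmann function gives $\Hess\log q_\psi=-E+\Realif(K_\psi)$ exactly, with $K_\psi$ complex symmetric and $\mathcal I$-block-diagonal.
\item Weighted AM--GM at a single point turns this into a lower bound ($\succcurlyeq$) for finite product mixtures.
\item Only one point with $q_T>0$ is involved. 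So trace-norm continuity of Husimi derivatives at that point passes the bound to arbitrary separable $T$, and no tail control of approximants is ever needed.
\item The symmetry $\Omega^{\mathsf T}\Realif(K)\Omega=-\Realif(K)$ upgrades the one-sided bound $\Realif(K_\varepsilon)\preccurlyeq(1-\delta)E$ to $\|K_\varepsilon\|\leqslant1-\delta$.
\item The Cayley transform of a limit point $K_*$ then gives a pure $\mathcal I$-product covariance $\gamma_*\preccurlyeq\gamma$. Physicality comes for free from $\|K_*\|<1$.
\end{itemize}
Even if you had the vector $\psi$ from Step~1, the natural way to finish would be this local comparison applied to $\proj{\psi}$, not growth-indicator theory.
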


\begin{corollary}\label{thm:Sep}
Let $\mathfrak{I}$ be an arbitrary collection of arbitrary partitions of $n$
modes. For any Gaussian state $\rho_G$ the following implication is valid:
\begin{equation}
    \rho_G \in \Sep(\mathfrak{I}) \Rightarrow
    \rho_G \in \Sep(\mathcal{I})\ \text{for some}\ \mathcal{I}\in \mathfrak{I}.
\end{equation}
In other words, mixing states separable across different partitions creates no
new Gaussian states beyond the union of the corresponding fixed-partition
separable sets.
\end{corollary}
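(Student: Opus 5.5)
The plan is to deduce the corollary directly from Theorem~\ref{thm:component}, by extracting a dominated separable component from any $\mathfrak I$-separable decomposition of $\rho_G$.

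First, since there are only finitely many partitions of $[n]$, the collection $\mathfrak I$ is finite. Because the outer trace-norm closure in Eq.~\eqref{eq:Sep-I} can be omitted (as stated after Definition~\ref{def:sep} and proved in Appendix~C), membership $\rho_G\in\Sep(\mathfrak I)$ gives an exact finite decomposition of the form \eqref{eq:rho-bisep}:
\begin{equation*}
\rho_G=\sum_{\mathcal I\in\mathfrak I}p_{\mathcal I}\rho_{\mathcal I},\qquad p_{\mathcal I}\geqslant 0,\qquad \sum_{\mathcal I} p_{\mathcal I}=1,\qquad \rho_{\mathcal I}\in\Sep(\mathcal I).
\end{equation*}
Since the weights sum to one, some $p_{\mathcal I_0}>0$.

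Next, set $T=p_{\mathcal I_0}\rho_{\mathcal I_0}$. This operator has the following properties.
\begin{enumerate}[label=(\roman*)]
\item It is positive and trace class.
\item It is nonzero, because $\Tr T=p_{\mathcal I_0}>0$.
\item It is $\mathcal I_0$-separable in the sense defined before Theorem~\ref{thm:component}, since $T/\Tr T=\rho_{\mathcal I_0}\in\Sep(\mathcal I_0)$.
\item It satisfies $T\preccurlyeq\rho_G$. Indeed, $\rho_G-T=\sum_{\mathcal I\neq\mathcal I_0}p_{\mathcal I}\rho_{\mathcal I}$ is a finite sum of positive operators with nonnegative coefficients, hence positive.
\end{enumerate}
Theorem~\ref{thm:component}, applied with the partition $\mathcal I_0$, then yields $\rho_G\in\Sep(\mathcal I_0)$. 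This is the claim with $\mathcal I=\mathcal I_0\in\mathfrak I$.

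No step here is genuinely hard once Theorem~\ref{thm:component} is available. The only delicate point is passing from the closed convex hull to an exact finite decomposition, and that is handled by Appendix~C. Without that result one would instead approximate $\rho_G$ in trace norm by finite mixtures and then use compactness: the weight vectors lie in a finite simplex, and the components are positive operators with uniformly controlled trace tails because they are dominated by states close to $\rho_G$. Extracting a convergent subsequence would again produce a decomposition of the form \eqref{eq:rho-bisep}, and the argument above would then go through unchanged.
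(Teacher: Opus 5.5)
Your proposal is correct and is essentially the paper's own proof. Both arguments take an exact decomposition of the form \eqref{eq:rho-bisep}, choose a partition $\mathcal I_0$ with $p_{\mathcal I_0}>0$, set $T=p_{\mathcal I_0}\rho_{\mathcal I_0}$ as the nonzero dominated $\mathcal I_0$-separable operator, and apply Theorem~\ref{thm:component}; you additionally make explicit the appeal to Appendix~C for exactness of the finite decomposition, which the paper's proof takes from the remark after Definition~\ref{def:sep}.
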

\noindent Note that the opposite implication trivially works for all states, not
only Gaussian.
\begin{proof}
Assume that Theorem~\ref{thm:component} has already been established. Suppose
that $\rho_G$ is Gaussian and belongs to $\Sep(\mathfrak{I})$. It admits an
exact convex decomposition given by Eq.~\eqref{eq:rho-bisep}. Terms with zero
weight may be omitted. Since the weights sum to one, choose a partition
$\mathcal{I}_0$ with $p_{\mathcal{I}_0}>0$, and put $T = p_{\mathcal{I}_0}
\rho_{\mathcal{I}_0}$. This is a nonzero positive operator separable across
$\mathcal{I}_0$. Moreover
\begin{equation}
    \rho_G - T = \sum_{\mathcal{I}_0\ne\mathcal{I}\in\mathfrak{I}} 
    p_\mathcal{I} \rho_\mathcal{I} \succcurlyeq 0,
\end{equation}
hence
\begin{equation}
    0\ne T \preccurlyeq \rho_G.
\end{equation}
Theorem~\ref{thm:component} now applies: a Gaussian state that dominates a nonzero
separable operator is itself separable across that operator's partition.
Therefore $\rho_G\in\Sep(\mathcal{I}_0)$.
\end{proof}
\noindent The Gaussian conjecture immediately follows from this corollary by
setting $\mathfrak{I} = \Pi_n$.
\begin{corollary}\label{thm:multipartite} For any $n\geqslant 3$, an $n$-mode
Gaussian state belongs to $\Bisep_n$ if and only if it belongs to $\Sep(\pi)$
for at least one fixed cut $\pi\in\Pi_n$. An equivalent formulation: a Gaussian
state is GME if and only if it is fully inseparable. Thus, the notions of full
inseparability and GME coincide for Gaussian states.
\end{corollary}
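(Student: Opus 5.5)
The plan is to specialize Corollary~\ref{thm:Sep} to the collection $\mathfrak{I}=\Pi_n$ of nontrivial bipartitions, and then restate the result in contrapositive form. By Eq.~\eqref{eq:Bisep}, $\Bisep_n=\Sep(\Pi_n)$. The set $\Pi_n$ is finite and nonempty for $n\geqslant 2$, and it consists of genuine partitions of the modes, so Corollary~\ref{thm:Sep} applies. Note that identifying a cut with its complement does not affect anything, since $\Sep(\pi)$ depends only on the unordered grouping.

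\emph{Forward direction.} Let $\rho_G$ be Gaussian with $\rho_G\in\Bisep_n$. Corollary~\ref{thm:Sep} then gives some $\pi\in\Pi_n$ with $\rho_G\in\Sep(\pi)$. Behind this lies Theorem~\ref{thm:component} together with the exact finite decomposition Eq.~\eqref{eq:rho-bisep}. That decomposition is available because the convex hull in Eq.~\eqref{eq:Sep-I} is already closed; this is the Appendix~C fact recorded in Definition~\ref{def:sep}, which we take as given.

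\emph{Reverse direction.} This holds for arbitrary states. If $\rho\in\Sep(\pi)$ for some $\pi\in\Pi_n$, then $\rho$ is a convex combination with all weight on $\pi$, so $\rho\in\conv\bigl(\bigcup_{\pi'\in\Pi_n}\Sep(\pi')\bigr)=\Bisep_n$. Together the two directions give
\begin{equation}
\rho_G\in\Bisep_n \iff \exists\,\pi\in\Pi_n:\ \rho_G\in\Sep(\pi).
\end{equation}

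\emph{Equivalent formulation.} Negating both sides gives $\rho_G\notin\Bisep_n$ if and only if $\rho_G\notin\Sep(\pi)$ for every $\pi\in\Pi_n$. By the definitions following Eq.~\eqref{eq:Bisep}, the left side says that $\rho_G$ is GME and the right side says that it is fully inseparable, which is the claimed coincidence. The restriction $n\geqslant3$ only makes the statement nontrivial: for $n=2$, $\Pi_2$ is a single cut and the two notions agree for all states. No step here is a real obstacle. All the substantive work is in Theorem~\ref{thm:component} and in the closedness of the finite convex hull, both of which are assumed.
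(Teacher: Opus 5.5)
Your proposal is correct and follows the paper's route: the paper obtains this corollary by setting $\mathfrak{I}=\Pi_n$ in Corollary~\ref{thm:Sep}. The converse holds trivially for all states, and the GME/full-inseparability statement is just the negation of both sides. Your side remarks on the exact finite decomposition (available since $\Pi_n$ is finite) and on the case $n=2$ are accurate.
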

\noindent Another corollary is the absence of identical-copy GME activation for Gaussian
states.
\begin{corollary}\label{cor:copies} If $\rho_G$ is a biseparable Gaussian
state, then for every integer $m\geqslant 1$, $\rho_G^{\otimes m}$ is
biseparable when each original party holds all of its own copies. Indeed it is
separable across a fixed induced cut.
\end{corollary}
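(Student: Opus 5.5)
The plan is to reduce everything to a single fixed cut and then use that fixed-cut separability is stable under tensor powers. Let $\rho_G$ be a biseparable Gaussian state on $n$ modes, so $\rho_G\in\Bisep_n=\Sep(\Pi_n)$. For $n\geqslant 3$ I will invoke Corollary~\ref{thm:multipartite}. Equivalently, I will apply Corollary~\ref{thm:Sep} with $\mathfrak I=\Pi_n$. Either way, this gives a single cut $\pi=\{A,B\}\in\Pi_n$ with $\rho_G\in\Sep(\pi)$. For $n=2$ the set $\Pi_2$ has only one element, so this step is trivial. This is the only place where Gaussianity and Theorem~\ref{thm:component} enter. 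Everything after this is a general statement about fixed-cut separable states, and I will not need to use that $\rho_G^{\otimes m}$ is again Gaussian.

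Next I fix the induced cut. Label the modes of $\rho_G^{\otimes m}$ by pairs $(c,j)$, where $c\in\{1,\dots,m\}$ is the copy and $j\in[n]$ is the original mode. Set
\begin{equation}
A^{(m)}=\{(c,j): j\in A\},\qquad B^{(m)}=\{(c,j): j\in B\},
\end{equation}
so each original party keeps all of its own copies. I will then show $\rho_G^{\otimes m}\in\Sep(\{A^{(m)},B^{(m)}\})$.

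By Definition~\ref{def:sep}, there are finite convex combinations
\begin{equation}
\sigma_k=\sum_{\ell}q_{k\ell}\,\alpha_{k\ell}\otimes\beta_{k\ell},
\end{equation}
with $\alpha_{k\ell}$ a state on $A$ and $\beta_{k\ell}$ a state on $B$, such that $\|\rho_G-\sigma_k\|_1\to 0$. Expanding $\sigma_k^{\otimes m}$ gives a finite convex combination of terms $\bigotimes_c(\alpha_{k\ell_c}\otimes\beta_{k\ell_c})$. The mode-regrouping unitary, a permutation of tensor factors, maps each such term to $\bigl(\bigotimes_c\alpha_{k\ell_c}\bigr)\otimes\bigl(\bigotimes_c\beta_{k\ell_c}\bigr)$. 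This is a product of a state on $A^{(m)}$ and a state on $B^{(m)}$. Hence $\sigma_k^{\otimes m}$ lies in the convex hull of product states across the induced cut.

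Finally I pass to the limit. The telescoping identity
\begin{equation}
\rho^{\otimes m}-\sigma^{\otimes m}=\sum_{c=1}^{m}\rho^{\otimes(c-1)}\otimes(\rho-\sigma)\otimes\sigma^{\otimes(m-c)}
\end{equation}
holds for any states $\rho,\sigma$. Since $\|X\otimes Y\|_1=\|X\|_1\|Y\|_1$, it yields $\|\rho_G^{\otimes m}-\sigma_k^{\otimes m}\|_1\leqslant m\|\rho_G-\sigma_k\|_1\to0$. The trace-norm closure in Eq.~\eqref{eq:SepI} therefore places $\rho_G^{\otimes m}$ in $\Sep(\{A^{(m)},B^{(m)}\})$. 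Because this is a single cut of the $nm$ modes, the state is in particular biseparable.

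The main obstacle is really the first step, the reduction from $\Bisep_n$ to a single cut $\pi$, and that is supplied entirely by Theorem~\ref{thm:component} through Corollary~\ref{thm:multipartite}. Without it, a mixture of states separable across different cuts would have tensor powers mixing different induced cuts. The remaining care is bookkeeping: the separable approximants must be finite mixtures, not countable or integral decompositions, so that the tensor power stays an algebraic convex combination, and the trace-norm estimate is what lets the closure absorb the limit.
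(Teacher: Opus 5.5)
Your proposal is correct and follows essentially the same route as the paper. You reduce to a single cut via Corollary~\ref{thm:Sep} with $\mathfrak I=\Pi_n$, expand tensor powers of finite product approximants across the induced cut, and pass to the limit using $\|\rho_G^{\otimes m}-\sigma_k^{\otimes m}\|_1\leqslant m\|\rho_G-\sigma_k\|_1$ and trace-norm closedness of $\Sep$. Your telescoping identity supplies the justification for this bound, which the paper only states, and you treat general $n$ directly (including $n=2$), whereas the paper writes out only the tripartite case.
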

\begin{proof}
For simplicity, consider the tripartite case, the extension to the general case
is straightforward. Label the three nodes by $A$, $B$ and $C$. If $\rho_G$ is
Gaussian and biseparable, it is separable across at least one fixed cut, for
example as $A|BC$. Then every tensor power remains separable across
\begin{equation}
    A_1 \ldots A_m | B_1 C_1 \ldots B_m C_m.
\end{equation}
This is the induced cut: the same original division is applied to all copies. To
see the mechanism explicitly, first consider a finite separable mixture across a
general bipartition $X|Y$:
\begin{equation}
    \rho = \sum_i p_i \sigma^X_i \otimes \tau^Y_i, \quad
    p_i \geqslant 0, \quad
    \sum_i p_i = 1.
\end{equation}
Its two-copy state, after regrouping the subsystems by party, is
\begin{equation}
    \rho^{\otimes 2} = \sum_{ij} p_i p_j (\sigma^{X_1}_i \otimes \sigma^{X_2}_j) \otimes
    (\tau^{Y_1}_i \otimes \tau^{Y_2}_j).
\end{equation}
Every term is a product across $X_1 X_2 | Y_1 Y_2$. Therefore
$\rho^{\otimes2}$ is separable across that induced cut. The same expansion
works for any finite number of copies. The crucial point is that every copy has
the same separable bipartition.

As we discussed earlier, an infinite-dimensional separable state need not admit
an exact finite or countable product decomposition. Choose finite product
mixtures $\rho_j$, separable across the fixed bipartition $X|Y$, such that
$\|\rho_j - \rho_G\|_1 \to 0$. For each $j$, the preceding calculation shows
that $\rho_j^{\otimes m}$ is separable across the induced cut. Moreover
\begin{equation}
    \|\rho^{\otimes m}_j - \rho^{\otimes m}_G\|_1 \leqslant m\|\rho_j - \rho_G\|_1 \to 0.
\end{equation}
Thus $\rho^{\otimes m}_G$ is a trace-norm limit of states separable across
that same bipartition. Since the separable set is closed, the limit is also
separable across it and hence biseparable.
\end{proof}

Recent independent work \cite{Yang2026GaussianRigidity} proves
Corollary~\ref{thm:Sep} by a different method. Another main result of that work
reads as follows: if $\Ran\rho_G^{1/2}$ of a Gaussian state $\rho_G$ contains a
nonzero vector factorizable across some partition $\mathcal{I}$, then
$\rho_G\in\Sep(\mathcal{I})$. This statement is also a simple consequence of
Theorem~\ref{thm:component}. In fact, let
\begin{equation}
    0\neq\ket\phi\in\Ran\rho_G^{1/2}
\end{equation}
be product across $\mathcal{I}$. By definition of the range, there is a
vector $\ket\psi$ such that
\begin{equation}
    \ket\phi=\rho_G^{1/2}\ket\psi.
\end{equation}
For every vector $\ket u$, Cauchy--Schwarz gives
\begin{equation}
    |\langle u\ket\phi|^2 = 
    |\langle \rho_G^{1/2}u\ket\psi|^2 \leqslant
    \|\psi\|^2\,\bra u\rho_G\ket u.
\end{equation}
Hence, as an operator inequality, we have
\begin{equation}
    \ket\phi\bra\phi \preccurlyeq \|\psi\|^2\rho_G.
\end{equation}
Since $\ket\phi\neq0$, also $\ket\psi\neq0$.  Define
\begin{equation}
    T=\frac{\ket\phi\bra\phi}{\|\psi\|^2}.
\end{equation}
We again obtain a hypothesis for Theorem~\ref{thm:component}
\begin{equation}
    0\neq T\preccurlyeq\rho_G.
\end{equation}
Because $\ket\phi$ is factorizable across $\mathcal{I}$, the rank-one operator
$T$ is separable across $\mathcal{I}$. The domination theorem now gives
\begin{equation}
    \rho_G\in\Sep(\mathcal{I}).
\end{equation}

We stress that Theorem 1 is a one-way implication. It says that a Gaussian state
dominating a nonzero separable positive operator is itself separable across that
operator's partition. It does not assert that every positive operator dominated
by a separable Gaussian state must itself be separable. In fact, we can
construct an example of a tripartite fully separable Gaussian state $\tau_3$
which is an equal-weight mixture of three entangled states. Define
\begin{equation}
    \tau = \frac{1}{2}\sum^{+\infty}_{k=0}2^{-k}\ket k \bra k,
\end{equation}
and set $\tau_3 = \tau^{\otimes 3}$. This is a mixed Gaussian state, fully
separable and even factorizable by construction. There is a decomposition
\begin{equation}\label{eq:tau}
    \tau_3 = \frac{1}{3}(\sigma_1 + \sigma_2 + \sigma_3),
\end{equation}
where 
\begin{equation}
\begin{split}
    \sigma_1\in\Sep(1|23), &\quad \sigma_1\text{ is NPT across }2|13\text{ and }3|12,\\
    \sigma_2\in\Sep(2|13), &\quad \sigma_2\text{ is NPT across }1|23\text{ and }3|12,\\
    \sigma_3\in\Sep(3|12), &\quad \sigma_3\text{ is NPT across }1|23\text{ and }2|13.
\end{split}
\end{equation}
Moreover, the covariance matrices
$\gamma_{\sigma_1},\gamma_{\sigma_2},\gamma_{\sigma_3}$ of the decomposition
states are mutually different and none equals $\gamma_{\tau_3}$, while their
arithmetic mean equals $\gamma_{\tau_3}$. Note that Eq.~\eqref{eq:tau} expresses
the equality of the full density operators, not just the covariance matrices. 

There is an even more extreme decomposition of this Gaussian state:
\begin{equation}
    \tau_3 = \frac{1}{2}(\sigma_+ + \sigma_-),
\end{equation}
where $\sigma_\pm$ are both GME. Although separable finite-dimensional states
are well known to admit decompositions into entangled and even GME states,
Ref.~\cite{Carvacho2017GHZGeometry}, we are not aware of an explicit finite
decomposition of a fully separable finite-mode Gaussian state into GME states.

\section{Husimi curvature of separable operators}\label{sec:curvature}

For a complex symmetric matrix $K$ define
\begin{equation}
 \Realif(K)=
 \begin{pmatrix}\operatorname{Re}K&-\operatorname{Im}K\\
                -\operatorname{Im}K&-\operatorname{Re}K\end{pmatrix}.
 \label{eq:realification}
\end{equation}
Simple properties of matrices of this form are expressed by the following lemma.
\begin{lemma}\label{lem:realification}
For complex symmetric $K$, $A=\Realif(K)$ is real symmetric and
\begin{equation}
 A\Omega=-\Omega A, \quad \Omega^TA\Omega=-A,\quad \|A\|=\|K\|.
 \label{eq:sign-reversal}
\end{equation}
In particular, if $A\preccurlyeq cE$, $c\geqslant 0$, then $-cE\preccurlyeq
A\preccurlyeq cE$ and $\|K\|\leqslant c$.
\end{lemma}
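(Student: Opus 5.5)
The statement is finite-dimensional linear algebra on real $2n\times 2n$ matrices, so I plan to verify it by direct block computation. Write $K=X+iY$ with $X,Y$ real; complex symmetry of $K$ means $X^T=X$ and $Y^T=Y$. Then
\begin{equation}
 A=\begin{pmatrix}X&-Y\\-Y&-X\end{pmatrix}
\end{equation}
is visibly real, and its transpose has diagonal blocks $X^T=X$, $-X^T=-X$ and off-diagonal blocks $-Y^T=-Y$. Hence $A$ is real symmetric.

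For the sign-reversal relations I will multiply blocks with $\Omega=\begin{pmatrix}0&E\\-E&0\end{pmatrix}$. This gives
\begin{equation}
 A\Omega=\begin{pmatrix}Y&X\\X&-Y\end{pmatrix},\qquad
 \Omega A=\begin{pmatrix}-Y&-X\\-X&Y\end{pmatrix},
\end{equation}
so $A\Omega=-\Omega A$. Since $\Omega^T=-\Omega$ and $\Omega^2=-E$, it follows that
\begin{equation}
 \Omega^TA\Omega=-\Omega A\Omega=A\Omega\Omega=-A.
\end{equation}

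For the norm identity, the cleanest route is to pass to complex coordinates. Identify $v=(a,b)\in\R^{2n}$ with the pair of complex vectors $z=a+ib$ and $\bar z=a-ib$. Compute $Kz=(Xa-Yb)+i(Ya+Xb)$. Its real part is the upper block $(Av)_1$, and its imaginary part is $-(Av)_2$, since $(Av)_2=-Ya-Xb$. So under the $\R$-linear isometry $J\colon v\mapsto z$ from $\R^{2n}$ to $\C^n$ (with the Euclidean norm), we have $JA=C\circ K\circ J$, where $C$ is complex conjugation. This says $A$ is the realification of the antilinear map $z\mapsto \overline{Kz}$. Because $J$ and $C$ are isometries, $\|Av\|=\|K Jv\|$ for every $v$, and taking suprema over unit vectors gives $\|A\|=\|K\|$. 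I expect this identification, and keeping the signs of the $p$-block straight, to be the only step needing care. As a cross-check I would use the Takagi factorization $K=U\Sigma U^T$, under which $A$ is orthogonally similar to $\Realif(\Sigma)=\operatorname{diag}(\Sigma,-\Sigma)$. This also exhibits the spectrum of $A$ as $\{\pm\sigma_j\}$.

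For the final claim, the relation $\Omega^TA\Omega=-A$ shows that $A$ is orthogonally similar to $-A$, because $\Omega$ is orthogonal. So the spectrum of $A$ is symmetric about $0$. Now suppose $A\preccurlyeq cE$. Then every eigenvalue $\lambda$ satisfies $\lambda\leqslant c$; since $-\lambda$ is also an eigenvalue, $-\lambda\leqslant c$ as well. Hence $-cE\preccurlyeq A\preccurlyeq cE$. Since $A$ is symmetric, this gives $\|A\|\leqslant c$, and by the norm identity $\|K\|\leqslant c$.
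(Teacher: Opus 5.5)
Your proposal is correct and follows the paper's proof essentially step for step. You use block multiplication for symmetry and the sign-reversal identities, the isometry $(x,p)\mapsto x+ip$ under which $A$ acts as $z\mapsto\overline{Kz}$ for the norm identity, and $\Omega^TA\Omega=-A$ with orthogonal $\Omega$ to turn $A\preccurlyeq cE$ into $-A\preccurlyeq cE$. The only cosmetic differences are that you argue the last step through the symmetric spectrum rather than by evaluating the quadratic form at $\Omega v$, and that your Takagi cross-check is an optional extra.
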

\noindent The standard norm of an operator $A$ is defined as 
\begin{equation}\label{eq:st-norm}
    \|A\| = \sup_{\|v\|=1} \|Av\|,
\end{equation}
where $\|v\|$ denotes the standard Euclidean norm on $\mathbb R^{2n}$ or
$\mathbb C^n$, as appropriate.
\begin{proof}
Block multiplication proves the first two identities. Under the real isometry
$(x,p)^{\mathsf T}\mapsto x+ip$, the action of $A$ is $u\mapsto\overline{Ku}$.
Complex conjugation preserves norm, so we have
\begin{equation}
    \left\|\Realif(K)
    \begin{pmatrix}
        x \\
        p
    \end{pmatrix}
    \right\|_{\R^{2n}} = 
    \|K(x+ip)\|_{\C^{n}},
\end{equation}
proving the operator-norm identity. Evaluating $A\preccurlyeq cE$ on $\Omega v$,
and using $\Omega^TA\Omega=-A$ together with $\|\Omega v\|=\|v\|$, gives
$-A\preccurlyeq cE$. The real symmetric spectral theorem gives the norm bound.
\end{proof}
The next result is an important construction in the proof of
Theorem~\ref{thm:component}.
\begin{lemma}[Cayley construction]\label{lem:cayley} Let $K^{\mathsf T}=K$, with
$\|K\|<1$, and put $A=\Realif(K)$. Then the matrix $\gamma$ defined via
\begin{equation}
 \gamma=(E-A)^{-1}-\tfrac{1}{2}E
       =\tfrac12(E+A)(E-A)^{-1}
 \label{eq:cayley}
\end{equation}
is the covariance matrix of a (zero-mean) pure Gaussian state. If $K$ is block
diagonal across $\mathcal{I}$, that state is a $\mathcal{I}$-product state.
\end{lemma}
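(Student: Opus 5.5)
The plan is to check directly that $\gamma$ is real symmetric and positive definite, and that $2\gamma$ is symplectic. Then I invoke the standard finite-mode characterization: with the normalization $\gamma_{\rm vac}=\tfrac12 E$, a real symmetric $\gamma>0$ is the covariance of a pure Gaussian state iff $(2\gamma)\Omega(2\gamma)=\Omega$, i.e.\ all symplectic eigenvalues of $\gamma$ equal $\tfrac12$. Equivalently, by Williamson's theorem, $\gamma=\tfrac12 SS^{\mathsf T}$ with $S$ symplectic. As a sanity check of the normalization, $K=0$ gives $A=0$ and $\gamma=\tfrac12E=\gamma_{\rm vac}$.

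\emph{Step 1 (well-definedness, symmetry, positivity).} By Lemma~\ref{lem:realification}, $A$ is real symmetric with $\|A\|=\|K\|<1$. Hence its spectrum lies in $(-1,1)$ and $E-A$ is invertible. Both expressions in Eq.~\eqref{eq:cayley} are functions of the single symmetric matrix $A$, and they agree since $(E-A)^{-1}-\tfrac12E=\tfrac12\bigl(2E-(E-A)\bigr)(E-A)^{-1}$. So $\gamma$ is symmetric, and it has eigenvalues $\tfrac12(1+a)/(1-a)>0$ for the eigenvalues $a\in(-1,1)$ of $A$. Thus $\gamma\succ0$.

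\emph{Step 2 (symplecticity).} This is the key identity. From $A\Omega=-\Omega A$ (Lemma~\ref{lem:realification}) we get $(E\pm A)\Omega=\Omega(E\mp A)$, and therefore $(E-A)^{-1}\Omega=\Omega(E+A)^{-1}$. Writing $M=2\gamma=(E+A)(E-A)^{-1}$, we obtain $M\Omega=\Omega(E-A)(E+A)^{-1}=\Omega M^{-1}$. Hence $M\Omega M=\Omega$, which is exactly the purity condition. The physicality condition $\gamma+i\Omega/2\succcurlyeq0$ then follows automatically from $\gamma=\tfrac12SS^{\mathsf T}$. The only delicate point I anticipate is matching the paper's conventions (the factor $\tfrac12$ and the ordering $(x,p)$ in $\Omega$) to the cited purity criterion; the algebra itself is short.

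\emph{Step 3 (product structure).} Suppose $K$ is block diagonal across $\mathcal I=\{I_1,\dots,I_k\}$, i.e.\ $K_{jl}=0$ whenever $j,l$ lie in different parts. Then $\operatorname{Re}K$ and $\operatorname{Im}K$ are block diagonal as well. After permuting coordinates to the grouping $(x_{I_1},p_{I_1},\dots,x_{I_k},p_{I_k})$, $A$ becomes $\bigoplus_j\Realif(K_j)$ with $K_j=K|_{I_j}$ symmetric and $\|K_j\|\le\|K\|<1$. This permutation maps $\Omega$ to $\bigoplus_j\Omega_{I_j}$. Since $\gamma$ is a function of $A$, it equals $\bigoplus_j\gamma_j$, where each $\gamma_j$ is the Cayley transform of $\Realif(K_j)$ and so, by Steps 1--2, is a pure Gaussian covariance on $I_j$. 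With zero mean, the characteristic function~\eqref{eq:characteristic} factorizes as $\prod_j\exp(-\tfrac12\xi_{I_j}^{\mathsf T}\gamma_j\xi_{I_j})$. This is the characteristic function of $\bigotimes_j\rho_j$ with $\rho_j$ the pure Gaussian state of covariance $\gamma_j$. By uniqueness of the state with a given characteristic function, the state is an $\mathcal I$-product state.
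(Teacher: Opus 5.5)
Your proposal is correct and follows the paper's proof essentially step by step. The shared steps are: the spectrum of $A$ lies in $(-1,1)$; the intertwining $(E\pm A)\Omega=\Omega(E\mp A)$ gives $M\Omega M=\Omega$ for $M=2\gamma$; the standard pure-Gaussian characterization is invoked; and in the block-diagonal case the characteristic function factorizes, with uniqueness finishing the argument. The only cosmetic difference is that you obtain $\gamma+i\Omega/2\succcurlyeq0$ from the Williamson form $\gamma=\tfrac12SS^{\mathsf T}$, whereas the paper exhibits $S=M^{1/2}$ explicitly, proving $M^{1/2}\Omega=\Omega M^{-1/2}$ from the eigenvector pairing $Mu=\lambda u\Rightarrow M\Omega u=\lambda^{-1}\Omega u$.
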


\begin{proof}
The real symmetric matrix $A$ has spectrum in $(-1,1)$. Thus $L=(E+A)(E-A)^{-1}$
is symmetric positive definite. Using the equality $A\Omega=-\Omega A$ and
moving $\Omega$ across the factors gives
\begin{equation}
  L\Omega=\Omega L^{-1},
 \qquad L\Omega L=\Omega,\qquad \gamma\Omega\gamma=\tfrac{1}{4}\Omega.
\end{equation}
This also implies physicality, not only positivity of $\gamma$. If
$Lu=\lambda u$, then $L\Omega u=\lambda^{-1}\Omega u$. The spectral theorem
consequently gives $L^{1/2}\Omega=\Omega L^{-1/2}$, and hence
\begin{equation}
 \gamma+i\Omega/2
   =\tfrac12 L^{1/2}(E+i\Omega)L^{1/2}\succcurlyeq 0.
\end{equation}
Here $i\Omega$ is Hermitian with eigenvalues $1$ and $-1$.

We now apply the standard result: every real covariance $\gamma$ with
$\gamma+i\Omega/2\succcurlyeq 0$ is realized by a normalized Gaussian state,
and this state is pure precisely when $\gamma\Omega\gamma=\Omega/4$,
Ref.~\cite{WernerWolf2001}. If $A$ is partition-block-diagonal, so are $L$
and $\gamma$. Each local block satisfies the same physicality and purity
conditions. Construct its corresponding (zero-mean) pure Gaussian state. Their
tensor product has the characteristic function with covariance $\gamma$, so by
uniqueness it is the global Gaussian just constructed.
\end{proof}

The main object that helps us to establish Theorem~\ref{thm:component} is the
Husimi function. For $r=(x,p)\in\R^{2n}$ put $\alpha=(x+ip)/\sqrt2$ and use
normalized coherent states
\begin{equation}
 \ket{\alpha}=e^{-\|\alpha\|^2/2}\sum_{m\in\N^n}
                   \frac{\alpha^{m}}{\sqrt{m!}}\ket{m}.
\end{equation}
Husimi function of $A$ is defined as
\begin{equation}
    q_{A}(r)=\bra{\alpha} A\ket{\alpha} = 
    \left\langle\frac{x+ip}{\sqrt{2}}\middle|A\middle|\frac{x+ip}{\sqrt{2}}\right\rangle. 
    \label{eq:husimi}
\end{equation}
In fact, we omit the prefactor of the standard definition; our function $q_\rho(r)$
has the property
\begin{equation}
    \int_{\R^{2n}} q_\rho(r) \, dr = (2\pi)^n.
\end{equation}
This prefactor plays no role in the proofs below, so there is no reason to
introduce it and carry throughout all derivations.

The central result in the proof of Theorem~\ref{thm:component} is the following
proposition.
\begin{proposition}\label{prop:separable-curvature} If $T$ is a
$\mathcal{I}$-separable (not necessarily normalized) operator and $q_T(r)>0$,
there exists a complex symmetric $\mathcal{I}$-block-diagonal $K_T(r)$
satisfying
\begin{equation}
    \Hess\log q_T(r)\succcurlyeq-E+\Realif(K_T(r)).
    \label{eq:separable-curvature}
\end{equation}
\end{proposition}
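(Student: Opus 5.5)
The plan is to prove \eqref{eq:separable-curvature} with equality for pure $\mathcal I$-product vectors, extend it to finite mixtures of such vectors by a log-sum-exp convexity argument, and reach the trace-norm closure in \eqref{eq:SepI} by a compactness argument that uses the bound of Lemma~\ref{lem:realification}.

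\emph{Step 1: pure product vectors.} For a vector $\ket\psi$ let $f_\psi(z)=\sum_m \bra{m}\psi\rangle\, z^m/\sqrt{m!}$ be its Bargmann function. This is entire in $z\in\C^n$, and
\begin{equation*}
q_{\proj\psi}(r)=e^{-\|\alpha\|^2}\,|f_\psi(\bar\alpha)|^2 .
\end{equation*}
Since $\|\alpha\|^2=\|r\|^2/2$, the first factor contributes exactly $-E$ to $\Hess\log q$. Where $f_\psi(\bar\alpha)\neq0$ we have $\log|f_\psi|^2=2\re g$ with $g=\log f_\psi$ holomorphic locally. The real Hessian of $\re g(\bar\alpha)$ in the variables $r=(x,p)$ is the realification $\Realif(K)$ of a multiple of the complex Hessian $\partial^2 g$. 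The factor $1/\sqrt2$ and the conjugation $\bar\alpha$ are absorbed in the constant and in the sign pattern of \eqref{eq:realification}; I would check the exact constant by block computation.

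If $\ket\psi=\ket{\psi_1}\otimes\dots\otimes\ket{\psi_k}$ is $\mathcal I$-product, then $f_\psi=\prod_j f_{\psi_j}(z_{I_j})$. Hence $g$ is a sum of functions of disjoint variable groups, $\partial^2 g$ is $\mathcal I$-block-diagonal, and \eqref{eq:separable-curvature} holds with equality.

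\emph{Step 2: finite mixtures.} Let $T=\sum_i p_i\proj{\psi_i}$ with product $\psi_i$, and write $q_T=\sum_i e^{\ell_i}$ with $\ell_i=\log(p_iq_{\psi_i})$, using only the terms with $q_{\psi_i}(r)>0$. With weights $\pi_i=e^{\ell_i}/q_T$ one has
\begin{equation*}
\Hess\log\sum_i e^{\ell_i}=\sum_i\pi_i\Hess\ell_i+\Bigl(\sum_i\pi_i\nabla\ell_i\nabla\ell_i^{\mathsf T}-\bar g\bar g^{\mathsf T}\Bigr),
\qquad \bar g=\sum_i\pi_i\nabla\ell_i ,
\end{equation*}
and the bracket is a covariance matrix, hence $\succcurlyeq0$. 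Linearity of $\Realif$ then gives the bound with $K_T=\sum_i\pi_iK_i$, which is again $\mathcal I$-block-diagonal.

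Terms vanishing at the point $r$ itself need a separate argument. I would work with $H=\sum_i p_i|f_{\psi_i}|^2$, so that $q_T=e^{-\|\alpha\|^2}H$. At a zero of $f_i$, the term $|f_i|^2$ has vanishing gradient and a positive semidefinite Hessian. It therefore adds nothing to $\nabla H$ and only a nonnegative matrix to $\Hess H$, so it can be discarded in $\Hess\log H=\Hess H/H-\nabla H\nabla H^{\mathsf T}/H^2$. This leaves the nonvanishing terms, to which the argument above applies.

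\emph{Step 3: the closure.} For general $\mathcal I$-separable $T$, choose finite product mixtures $T_j\to T$ in trace norm. Because the coherent-state matrix elements and their $r$-derivatives are bounded, trace-norm convergence gives $q_{T_j}\to q_T$ together with first and second derivatives at $r$. Since $q_T(r)>0$, this yields $\Hess\log q_{T_j}(r)\to\Hess\log q_T(r)$.

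To pass the matrices $K_j=K_{T_j}(r)$ to the limit, the inequality from Step 2 gives
\begin{equation*}
\Realif(K_j)\preccurlyeq \Hess\log q_{T_j}(r)+E\preccurlyeq cE
\end{equation*}
with $c$ uniform in $j$. Lemma~\ref{lem:realification} then gives $\|K_j\|\leqslant c$. Extract a convergent subsequence; the set of complex symmetric $\mathcal I$-block-diagonal matrices is closed, and the inequality is preserved under limits, which proves the proposition.

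I expect the main obstacle to be Step 3. It needs the uniform bound on $K_j$, which comes from the one-sided Lemma~\ref{lem:realification} and is not obvious a priori, since individual pure components can have arbitrarily large curvature. It also needs a careful justification of derivative convergence of Husimi functions under trace-norm limits. Fixing the exact sign and normalization in the realification identity of Step 1 is routine but must be done carefully.
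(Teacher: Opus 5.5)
Your proposal is correct and follows the paper's structure: the pure-state identity via the Bargmann function (Lemma~\ref{lem:product}), a finite-mixture inequality with $K_T=\sum_i\pi_iK_i$, and the closure step, which uses trace-norm convergence of Husimi derivatives, the two-sided bound from Lemma~\ref{lem:realification}, and extraction of a convergent subsequence. The one variation is in the mixture step: you use the explicit log-sum-exp Hessian, whose gradient-covariance term is nonnegative, and treat components vanishing at $r$ separately, while the paper's weighted AM--GM comparison function with a local minimum at $r_0$ handles those components automatically; both give the same weights and the same $K_T$.
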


\begin{proof}[Proof of Theorem~\ref{thm:component}]
Assume that Proposition~\ref{prop:separable-curvature} has been established. We
show how to prove Theorem~\ref{thm:component}. The operator inequality
$0\preccurlyeq T\preccurlyeq\rho_G$ gives
\begin{equation}
  0 \leqslant q_T(r) \leqslant q_G(r)
\end{equation}
for every $r$. Moreover, $T\ne0$ implies that $q_T$ is positive somewhere.

Because $\rho_G$ is Gaussian, its Husimi function has the form
\begin{equation}
  q_G(r)=\sqrt{\det(B)}e^{-\tfrac12(r-\mu)^TB(r-\mu)},
 \quad B=(\gamma+E/2)^{-1}\succ 0.
\end{equation}
Consequently
\begin{equation}
  \nabla^2 \log q_G(r) = -B.
\end{equation}
The important feature is that this logarithmic Hessian is constant throughout
phase space.

Separability of $T$ supplies the other ingredient.
Proposition~\ref{prop:separable-curvature} says that, at every point where
$q_T(r)>0$, there is a complex symmetric matrix $K(r)$, block diagonal
across $\mathcal I$, such that
\begin{equation}
  \nabla^2 \log q_T(r) \succcurlyeq -E + \Realif(K(r)).
\end{equation}
We now want to combine the upper bound $q_T\leqslant q_G$ with the curvature lower
bound. We cannot simply differentiate the inequality $q_T\leqslant q_G$ and conclude
an inequality between their Hessians. Instead, we compare them at a maximum of
their ratio. The ratio $q_T/q_G$ need not attain its supremum. To ensure that
a maximum exists, introduce
\begin{equation}
  F_\varepsilon(r) = e^{-\varepsilon\|r\|^2} \frac{q_T(r)}{q_G(r)}
\end{equation}
for $\varepsilon > 0$. Since $q_T/q_G\leqslant 1$
\begin{equation}
  0 \leqslant F_\varepsilon(r) \leqslant e^{-\varepsilon\|r\|^2}.
\end{equation}
Thus $F_\varepsilon$ tends to zero at infinity. It is continuous and positive
somewhere, so it attains a positive global maximum at some point
$r_\varepsilon$. In particular, $q_T(r_\varepsilon)>0$.

At this point, the Hessian of $\log F_\varepsilon$ is negative semidefinite:
\begin{equation}
  \nabla^2 \log F_\varepsilon(r_\varepsilon) \preccurlyeq 0.
\end{equation}
But
\begin{equation}
  \log F_\varepsilon(r) = -\varepsilon\|r\|^2 + \log q_T(r) - \log q_G(r).
\end{equation}
Therefore
\begin{equation}
  -2\varepsilon E + \nabla^2 \log q_T(r_\varepsilon) + B \preccurlyeq 0,
\end{equation}
or
\begin{equation}
  \nabla^2 \log q_T(r_\varepsilon) \preccurlyeq -B + 2\varepsilon E.
\end{equation}
At the same point, Proposition~\ref{prop:separable-curvature} gives a
block-diagonal matrix $K_\varepsilon$ satisfying the opposite bound. Combining
them
\begin{equation}
  -E + \Realif(K_\varepsilon) \preccurlyeq 
  \nabla^2 \log q_T(r_\varepsilon) \preccurlyeq -B + 2\varepsilon E.
\end{equation}
Hence
\begin{equation}\label{eq:EB}
  E - \Realif(K_\varepsilon) \succcurlyeq B - 2\varepsilon E.
\end{equation}
We want to let $\varepsilon\to0$ in Eq.~\eqref{eq:EB}. To do this, we must
first ensure that the matrices $K_\varepsilon$ remain bounded. Let
\begin{equation}
  b = \lambda_{\mathrm{min}}(B) > 0.
\end{equation}
For $0<\varepsilon<b/4$, inequality \eqref{eq:EB} gives
\begin{equation}
  E-\Realif(K_\varepsilon) \succcurlyeq \frac{b}{2}E.
\end{equation}
Set
\begin{equation}
  \delta = \min\left(\frac{b}{2}, \frac{1}{2}\right) > 0.
\end{equation}
Then
\begin{equation}
  \Realif(K_\varepsilon) \preccurlyeq (1-\delta)E.
\end{equation}
Here the special structure of $\mathcal R(K_\varepsilon)$ matters.
Lemma~\ref{lem:realification} shows that its eigenvalues occur in opposite pairs
and that
\begin{equation}
  \|\Realif(K_\varepsilon)\| = \|K_\varepsilon\|.
\end{equation}
Thus the upper bound also controls the negative eigenvalues, giving
\begin{equation}
  \|K_\varepsilon\| \leqslant 1-\delta<1.
\end{equation}
The matrices $K_\varepsilon$ therefore lie in a fixed closed bounded subset of
a finite-dimensional space. Along a sequence $\varepsilon\downarrow0$, we can
choose a convergent subsequence, $K_\varepsilon \to K_*$. The limit remains
complex symmetric and block diagonal across $\mathcal I$, and it satisfies
\begin{equation}\label{eq:KB}
  \|K_*\| < 1, \quad
  E-\Realif(K_*) \succcurlyeq B.
\end{equation}
The uniform gap $\delta$ ensures that the limit still has norm strictly less
than one. The maximizing points $r_\varepsilon$ do not need to converge; only
the matrices do.

We now turn the matrix $K_*$ into a Gaussian state. Lemma~\ref{lem:cayley}
establishes that, because $K_*^{\mathsf T}=K_*$ and $\|K_*\|<1$
\begin{equation}
  \gamma_* = (E-\Realif(K_*))^{-1} - \tfrac{1}{2}E
\end{equation}
is the covariance of a zero-mean pure Gaussian state. Because $K_*$ is block
diagonal across $\mathcal I$, that state is a product across $\mathcal I$.
This is an important role of Lemma~\ref{lem:cayley}: it verifies that
$\gamma_*$ is a physically valid pure-state covariance, not merely a positive
matrix. In particular, it establishes the physicality and purity conditions
\begin{eqnarray}
  \gamma_* + \frac{i}{2}\Omega \succcurlyeq 0, \quad
  \gamma_* \Omega \gamma_* = \tfrac{1}{4}\Omega,
\end{eqnarray}
and checks the product structure. It remains to compare this covariance with
$\gamma$. From Eq.~\eqref{eq:KB}
\begin{equation}
  E-\Realif(K_*) \succcurlyeq B = \left(\gamma + \tfrac{1}{2}E\right)^{-1}.
\end{equation}
Inversion reverses the order of positive-definite matrices. Therefore
\begin{equation}
  (E-\Realif(K_*))^{-1} \preccurlyeq \gamma + \tfrac{1}{2}E.
\end{equation}
Subtracting $E/2$ gives $\gamma_* \preccurlyeq \gamma$.

Let $\rho_*$ be the zero-mean pure product Gaussian state just constructed,
and define
\begin{equation}
  C = \gamma - \gamma_* \succcurlyeq 0.
\end{equation}
Choose a classical Gaussian random displacement $\xi$ with mean $\mu$ and
covariance $C$. For example
\begin{equation}
  \xi = \mu + C^{1/2}Z,
\end{equation}
where $Z$ is a standard real Gaussian vector. This definition also works when
$C$ is singular. Apply the displacement to $\rho_*$, and average:
\begin{equation}\label{eq:rho-tilde}
  \tilde{\rho} = \int D(\xi)\rho_* D^\dagger(\xi) \, d\nu(\xi),
\end{equation}
where $\nu$ is the distribution of $\xi$, and $D(\xi)$ is the displacement
operator. Every displacement factors across the partition:
\begin{equation}
  D(\xi) = D_{I_1}(\xi_1) \otimes \ldots \otimes D_{I_k}(\xi_k).
\end{equation}
Since $\rho_*$ is a product state, each displaced state is also a product state.
Their classical mixture is therefore separable. Note that the displacement
components need not be independent across parties. In particular, $C$ need not
be block-diagonal across $\mathcal{I}$.

Finally, this mixture is exactly $\rho_G$. Its characteristic function is
\begin{equation}
  \chi_{\tilde{\rho}}(t) = e^{-t^{\mathsf T}\gamma_* t/2} e^{it^{\mathsf T}\mu - t^{\mathsf T}C t/2} = 
  e^{it^{\mathsf T}\mu - t^{\mathsf T}\gamma t/2} = \chi_{\rho_G}(t).
\end{equation}
Equality of characteristic functions implies equality of the states,
$\tilde{\rho} = \rho_G$. Thus we have represented $\rho_G$ as a mixture of
product states, proving $\rho_G \in \Sep(\mathcal{I})$. The
characteristic-function calculation is essential: it proves equality of the
complete states, rather than merely equality of their first and second moments.
\end{proof}

It now remains to prove Proposition~\ref{prop:separable-curvature}. We do this
in three steps: first for pure states (where the inequality
\eqref{eq:separable-curvature} becomes matrix equality), then for finite
mixtures and finally for all separable states. The last step is non-trivial ---
according to Ref.~\cite{Holevo2005}, there are separable states that do not have
countable decompositions.

\begin{lemma}\label{lem:product} Let $\ket\psi$ be a normalized vector. At a
point where $q_\psi(r)>0$ there is a complex symmetric matrix $K_\psi(r)$,
$K^T_\psi(r)=K_\psi(r)$, such that
\begin{equation}
    \Hess\log q_\psi(r)=-E+\Realif(K_\psi(r)).
    \label{eq:product-curvature}
\end{equation}
If $\psi$ is an $\mathcal{I}$-product vector, $K_\psi(r)$ is block diagonal
across $\mathcal{I}$.
\end{lemma}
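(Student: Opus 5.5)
We use the Bargmann (Segal–Bargmann) representation. For a normalized vector $\ket\psi$, write $\bra{\alpha}\psi\rangle = e^{-\|\alpha\|^2/2}\,\overline{f(\alpha)}$, with
\begin{equation}
 f(\alpha)=\sum_{m}\frac{\overline{\psi_m}}{\sqrt{m!}}\alpha^m .
\end{equation}
Here $f$ is an entire function on $\C^n$ and $\psi_m=\langle m\ket\psi$; the conjugation convention is immaterial below. Then
\begin{equation}
 q_\psi(r)=e^{-\|\alpha\|^2}|f(\alpha)|^2,\qquad \alpha=(x+ip)/\sqrt2 .
\end{equation}
At a point with $q_\psi(r)>0$ we have $f(\alpha)\neq0$, so on a neighbourhood we may choose a holomorphic branch $g=\log f$. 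This gives
\begin{equation}
 \log q_\psi(r) = -\|\alpha\|^2 + 2\re g(\alpha).
\end{equation}
Since $\|\alpha\|^2=\tfrac12(\|x\|^2+\|p\|^2)$, the first term contributes exactly $-E$ to the Hessian in $r$. The plan is therefore to show that $\Hess\,2\re g=\Realif(K)$ for a suitable complex symmetric $K$.

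Let $H=(\partial_{\alpha_j}\partial_{\alpha_k}g)$, which is complex symmetric. For a holomorphic $g$ in the real coordinates $u=\re\alpha$ and $v=\im\alpha$, the Cauchy–Riemann equations give
\begin{equation}
 \partial_{u_j}\partial_{u_k}\re g=\re H_{jk},\qquad
 \partial_{u_j}\partial_{v_k}\re g=-\im H_{jk},\qquad
 \partial_{v_j}\partial_{v_k}\re g=-\re H_{jk}.
\end{equation}
Hence the real Hessian of $\re g$ in $(u,v)$ equals $\Realif(H)$. Since $u=x/\sqrt2$ and $v=p/\sqrt2$, the chain rule contributes a factor $\tfrac12$, so
\begin{equation}
 \Hess_r\,2\re g = \Realif(H)=\Realif(K_\psi(r)),\qquad K_\psi(r)=H(\alpha).
\end{equation}
This is exactly Eq.~\eqref{eq:product-curvature}. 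If the paper's coherent-state convention instead makes $q_\psi$ depend on $\bar\alpha$, one replaces $H$ by $\overline{H}$, which is still complex symmetric. The index ordering of $\Realif$ matches the ordering $r=(x,p)$, and this sign and ordering bookkeeping is the only place where care is needed.

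For the product statement, suppose $\ket\psi=\ket{\psi_1}\otimes\cdots\otimes\ket{\psi_k}$ across $\mathcal I$. Coherent states factor across the modes, so $f(\alpha)=\prod_j f_j(\alpha_{I_j})$ and $g=\sum_j g_j(\alpha_{I_j})$ locally. Mixed second derivatives between different blocks therefore vanish, and $H$ is block diagonal across $\mathcal I$.

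The main obstacle is only a minor technical one: the logarithm exists locally near a nonzero of $f$, which suffices because the Hessian is a local quantity. Alternatively, one can avoid logarithms by computing directly
\begin{equation}
 \Hess\log|f|^2 \quad\text{from}\quad H=\frac{f''}{f}-\frac{f'f'^{T}}{f^2}.
\end{equation}
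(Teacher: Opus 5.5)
Your proof is correct and is essentially the paper's argument. You use the Bargmann function, the Cauchy--Riemann relations (the paper's $\partial_{p_j}F = i\,\partial_{x_j}F$) to produce the $\Realif$ structure of the Hessian of $\log|f|^2$, and the additive splitting of $\log|f|^2$ for product vectors. The only cosmetic difference is that you pass through a local holomorphic branch of $\log f$, whereas the paper differentiates $\log|F|^2$ directly via the quotient formula you mention at the end. After the chain-rule factor $\tfrac12$, your $K_\psi=(\partial_{\alpha_j}\partial_{\alpha_k}\log f)$ coincides exactly with the paper's $K$.
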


\begin{proof}
The state $\ket\psi$ has the expansion in the Fock basis
\begin{equation}
    \ket\psi = \sum_{m\in\N^n} c_m \ket m.    
\end{equation}
Define 
\begin{equation}
    f_\psi(\alpha) = \sum_{m\in\N^n} \overline{c_m} \frac{\alpha^m}{\sqrt{m!}}.
\end{equation}
We have
\begin{equation}
 \bra\psi \alpha\rangle=e^{-\|\alpha\|^2/2}f_\psi(\alpha),\qquad
 q_\psi(\alpha)=e^{-\|\alpha\|^2}|f_\psi(\alpha)|^2.
\end{equation}
Define a function of the real argument
\begin{equation}
    F(x, p) = f_\psi\left(\frac{x+ip}{\sqrt{2}}\right).
\end{equation}
The Husimi function reads as
\begin{equation}
    q_\psi(x, p) = e^{-(\|x\|^2+\|p\|^2)/2} |F(x, p)|^2.
\end{equation}
At a point $(x, p)$ where $q_\psi(x, p)>0$ we can write
\begin{equation}
    \log q_\psi(x, p) = -\tfrac{1}{2}(\|x\|^2+\|p\|^2) + h(x, p), \quad
    h(x, p) = \log |F(x, p)|^2.
\end{equation}
The Gaussian term has Hessian $-E_{2n}$. We therefore need only calculate the
Hessian of $h$.

It is easy to see that
\begin{equation}
    \frac{\partial F}{\partial p_j} = i \frac{\partial F}{\partial x_j}.
\end{equation}
Differentiating again gives
\begin{equation}
    \frac{\partial^2 F}{\partial x_j \partial p_k} = 
    i\frac{\partial^2 F}{\partial x_j \partial x_k}, \quad
    \frac{\partial^2 F}{\partial p_j \partial p_k} = 
    -\frac{\partial^2 F}{\partial x_j \partial x_k}.
\end{equation}
These identities contain all the analytic structure we need.

For any real coordinate $a$, the product and chain rules give
\begin{equation}
    \frac{\partial h}{\partial a} = 
    \frac{1}{|F|^2} \frac{\partial F F^*}{\partial a} = 
    \frac{\frac{\partial F}{\partial a}F^* + F \frac{\partial F^*}{\partial a}}{|F|^2} = 
    2\re\left(\frac{1}{F}\frac{\partial F}{\partial a}\right).
\end{equation}
Differentiating once more with respect to another real coordinate $b$
\begin{equation}
    \frac{\partial^2 h}{\partial a \partial b} = 
    2\re\left(\frac{1}{F}\frac{\partial^2 F}{\partial a \partial b} - 
    \frac{1}{F^2}\frac{\partial F}{\partial a}\frac{\partial F}{\partial b}\right).
\end{equation}
Now define the $n\times n$ complex matrix
\begin{equation}
    K_{ij} = 2\left(\frac{1}{F}\frac{\partial^2 F}{\partial x_i \partial x_j} - 
    \frac{1}{F^2}\frac{\partial F}{\partial x_i}\frac{\partial F}{\partial x_j}\right).
\end{equation}
Because mixed derivatives commute, the matrix $K_\psi$ is complex symmetric.

For the $xx$-block, equations above give
\begin{equation}
    \frac{\partial^2 h}{\partial x_i \partial x_j} = \re K_{ij}.
\end{equation}
For the mixed block we have
\begin{equation}
    \frac{\partial^2 h}{\partial x_i \partial p_j} = 
    2\re\left(\frac{i}{F}\frac{\partial^2 F}{\partial x_i \partial x_j} - 
    \frac{i}{F^2}\frac{\partial F}{\partial x_i}\frac{\partial F}{\partial x_j}\right) = 
    \re(iK_{ij}) = -\im(K_{ij}).
\end{equation}
For the $pp$-block
\begin{equation}
    \frac{\partial^2 h}{\partial p_i \partial p_j} = 
    2\re\left(-\frac{1}{F}\frac{\partial^2 F}{\partial x_i \partial x_j} + 
    \frac{1}{F^2}\frac{\partial F}{\partial x_i}\frac{\partial F}{\partial x_j}\right) = 
    -\re(K_{ij}).
\end{equation}
Therefore
\begin{equation}
    \nabla^2 h = 
    \begin{pmatrix}
        \re K_\psi & -\im K_\psi \\
        -\im K_\psi & -\re K_\psi
    \end{pmatrix}
    = \Realif(K_\psi).
\end{equation}
Adding the Gaussian contribution, we obtain
\begin{equation}
    \nabla^2 \log q_\psi = -E_{2n} + \Realif(K_\psi).
\end{equation}

Suppose now that $\psi$ is factorizable across a partition $\mathcal I =
\{I1, \ldots, I_k\}$. The corresponding $F$ then factors:
\begin{equation}
    F(r) = \prod^k_{j=1} F_j(r_{I_j}).
\end{equation}
Where $F\neq0$, every factor is nonzero, and
\begin{equation}
    h(r) = \log |F(r)|^2 = 
    \sum^k_{j=1} \log |F_j(r_{I_j})|^2.
\end{equation}
Each summand depends only on its own party’s coordinates. Hence all second
derivatives coupling different parties vanish. Therefore $K$ is block diagonal
across $\mathcal{I}$.
\end{proof}

We now extend this result for finite mixtures. In general, only inequality is
guaranteed to hold, but that is what we need.
\begin{lemma}\label{prop:finite-product} Let $T=\sum_{\ell=1}^m
w_\ell\proj{\psi_\ell}$, where $w_\ell\geqslant 0$ and every $\psi_\ell$ is a
normalized $\mathcal{I}$-product vector. At every $r$ with $q_T(r)>0$, there is
a complex symmetric $\mathcal{I}$-block-diagonal matrix $K_T(r)$ such that
\begin{equation}
    \Hess\log q_T(r)\succcurlyeq-E+\Realif(K_T(r)).
\end{equation}
\end{lemma}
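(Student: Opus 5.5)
We reduce the statement to Lemma~\ref{lem:product} using the standard log-sum-exp convexity inequality. Write $q_\ell=q_{\psi_\ell}$, so $q_T=\sum_\ell w_\ell q_\ell$. Fix $r$ with $q_T(r)>0$. Discard the indices with $w_\ell q_\ell(r)=0$; call the remaining set $S\neq\emptyset$. For $\ell\in S$ define the weights
\begin{equation}
  \lambda_\ell=\frac{w_\ell q_\ell(r)}{q_T(r)},\qquad \sum_{\ell\in S}\lambda_\ell=1 .
\end{equation}

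The Hessian of the logarithm of a sum is governed by the identity
\begin{equation}
  \Hess\log q_T=\frac{\Hess q_T}{q_T}-\frac{\nabla q_T\nabla q_T^{\mathsf T}}{q_T^2}.
\end{equation}
For $\ell\in S$ we have $\Hess q_\ell/q_\ell=\Hess\log q_\ell+g_\ell g_\ell^{\mathsf T}$ with $g_\ell=\nabla\log q_\ell(r)$. The terms with $\ell\notin S$ need care: there $w_\ell q_\ell$ vanishes at $r$ but still contributes to the derivatives. If $w_\ell=0$ the term is absent. Otherwise $q_\ell(r)=0$, and since $q_\ell\geqslant0$ is smooth, $r$ is a minimum of $q_\ell$. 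Hence $\nabla q_\ell(r)=0$ and $\Hess q_\ell(r)\succcurlyeq0$, so these terms only add a positive semidefinite piece to $\Hess q_T$ and can be dropped from a lower bound. We therefore get
\begin{equation}
  \Hess\log q_T(r)\succcurlyeq\sum_{\ell\in S}\lambda_\ell\Hess\log q_\ell(r)
  +\Bigl(\sum_\ell\lambda_\ell g_\ell g_\ell^{\mathsf T}-\bar g\bar g^{\mathsf T}\Bigr),
\end{equation}
where $\bar g=\sum_\ell\lambda_\ell g_\ell$. The bracket is the covariance matrix $\sum_\ell\lambda_\ell(g_\ell-\bar g)(g_\ell-\bar g)^{\mathsf T}\succcurlyeq0$, so it can also be dropped.

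Next we apply Lemma~\ref{lem:product} to each $\psi_\ell$ with $\ell\in S$. This gives $\Hess\log q_\ell(r)=-E+\Realif(K_\ell)$, where each $K_\ell$ is complex symmetric and $\mathcal I$-block-diagonal. Since $\Realif$ is real-linear, we define
\begin{equation}
  K_T(r)=\sum_{\ell\in S}\lambda_\ell K_\ell .
\end{equation}
This matrix is again complex symmetric and block diagonal across $\mathcal I$, and it satisfies $\sum_{\ell}\lambda_\ell(-E+\Realif(K_\ell))=-E+\Realif(K_T(r))$. This proves the claim.

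The main obstacle is the treatment of the vanishing terms with $\ell\notin S$. They must be handled through second-order information at a zero of a nonnegative smooth function, not by writing $\log q_\ell$, which is undefined there. Smoothness of $q_\ell$ holds because the entire function $F$ enters only through $e^{-\|r\|^2/2}|F|^2$. The covariance step is routine.
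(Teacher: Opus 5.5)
Your proof is correct, but it takes a different route from the paper. You differentiate the mixture directly through the quotient formula for $\Hess\log q_T$. Evaluated at $r$, this yields the exact decomposition
\[
\Hess\log q_T=\sum_{\ell\in S}\lambda_\ell\Hess\log q_\ell+\sum_{\ell\in S}\lambda_\ell(g_\ell-\bar g)(g_\ell-\bar g)^{\mathsf T}+\frac{1}{q_T}\sum_{\ell\notin S}w_\ell\Hess q_\ell ,
\]
and you then discard the last two positive semidefinite terms.

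The paper never differentiates the sum. It freezes the weights $p_\ell=u_\ell(r_0)/q_T(r_0)$ and drops the inactive components at the level of function values. It then applies the weighted arithmetic--geometric mean inequality to get $\log q_T(r)-\sum_\ell p_\ell\log u_\ell(r)\geqslant\text{const}$ near $r_0$, with equality at $r_0$. The second-derivative test at this local minimum finishes the argument.

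The paper's route therefore needs only continuity of the active components and never looks at derivatives of the vanishing ones. Yours needs the extra observation that an inactive $q_\ell$ attains its global minimum $0$ at $r$, so it has zero gradient and a positive semidefinite Hessian there. You supply this correctly from the smoothness of $e^{-\|r\|^2/2}|F|^2$.

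In exchange, your computation is fully explicit. It shows exactly what the inequality throws away: a gradient-covariance term and the curvature of the vanishing components. Your weights $\lambda_\ell$ coincide with the paper's $p_\ell$, so both proofs produce the same matrix $K_T=\sum_\ell\lambda_\ell K_\ell$.
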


\begin{proof}
Write $u_\ell(r) = w_\ell q_{\psi_\ell}(r)$, so that 
\begin{equation}
    q_T(r) = \sum^m_{\ell=1}u_\ell(r).
\end{equation}
Fix $r_0$ with $q_T(r_0)>0$ and define the active index set
\begin{equation}
    I_+ = \{\ell: u_\ell(r_0)>0\}.
\end{equation}
For $\ell\in I_+(r_0)$, set
\begin{equation}
    p_\ell = \frac{u_\ell(r_0)}{q_T(r_0)}.
\end{equation}
Then
\begin{equation}
    p_\ell > 0, \quad \sum_{\ell\in I_+} p_\ell = 1.
\end{equation}
These weights are now held fixed as $r$ varies. Since there are finitely many
positive components and they are continuous, all these components remain
positive in a sufficiently small neighborhood of $r_0$.

For $r$ in that neighborhood
\begin{equation}
    \frac{q_T(r)}{q_T(r_0)} \geqslant \sum_{\ell\in I_+} \frac{u_\ell(r)}{q_T(r_0)} = 
    \sum_{\ell\in I_+} p_\ell \frac{u_\ell(r)}{u_\ell(r_0)} \geqslant
    \prod_{\ell\in I_+}\left(\frac{u_\ell(r)}{u_\ell(r_0)}\right)^{p_\ell}.
\end{equation}
The first inequality simply drops nonnegative components. The last is the
weighted arithmetic-geometric mean inequality. Taking logarithms gives
\begin{equation}
    F(r) = \log \frac{q_T(r)}{q_T(r_0)} - 
    \sum_{\ell\in I_+} p_\ell \log\frac{u_\ell(r)}{u_\ell(r_0)} \geqslant 0.
\end{equation}
At $r=r_0$, all ratios equal one, so $F(r_0) = 0$. Therefore $F$ has a local
minimum at $r_0$, and
\begin{equation}
    \nabla^2 F(r_0) \succcurlyeq 0.
\end{equation}
The denominators in the logarithms are constants, and so are the $p_\ell$.
Hence
\begin{equation}\label{eq:qu}
    \nabla^2 \log q_T(r_0) \succcurlyeq \sum_{\ell\in I_+} p_\ell \nabla^2 \log u_\ell(r_0).
\end{equation}
This proves the needed mixture inequality directly. The equality at $r_0$ is
essential: we are using the second-derivative test at a local minimum, not
merely differentiating a pointwise inequality.

Multiplication by constants does not change the logarithmic Hessian:
\begin{equation}
    \nabla^2 \log u_\ell(r) = \nabla^2 \log q_{\psi_\ell}(r).
\end{equation}
Lemma~\ref{lem:product} gives
\begin{equation}
    \nabla^2 \log q_{\psi_\ell}(r) = -E + \Realif(K_\ell),
\end{equation}
where each $K_\ell$ is symmetric and block diagonal across $\mathcal{I}$.
Substituting into \eqref{eq:qu}
\begin{equation}
    \nabla^2 \log q_T(r_0) \succcurlyeq 
    \sum_{\ell\in I_+} p_\ell[-E + \Realif(K_\ell)] = 
    -E + \Realif\left(K_T\right),
\end{equation}
where
\begin{equation}
    K_T = \sum_{\ell\in I_+} p_\ell K_\ell.
\end{equation}
A real weighted average preserves symmetry and block-diagonal structure. This
proves the proposition.
\end{proof}

Now we can prove Proposition~\ref{prop:separable-curvature}.
\begin{proof}[Proof of Proposition~\ref{prop:separable-curvature}]
Fix one point $r$ with $q_T(r)>0$. All the following matrices and
derivatives are evaluated at this fixed point. Since $T$ is a nonzero
separable positive operator, write
\begin{equation}
    T = t\sigma, \quad t = \Tr T > 0, \quad \sigma \in \Sep(\mathcal{I}).
\end{equation}
Every local density operator is a trace-norm limit of finite convex combinations
of pure states by its spectral decomposition; hence every product density
operator, and therefore every finite mixture of product density operators, can
be approximated in trace norm by finite mixtures of pure product
projectors. Thus there are finite positive mixtures
\begin{equation}
    T_j = \sum^{m_j}_{\ell=1} w_{j\ell}\proj{\psi_{j\ell}},  \quad w_{j\ell} \geqslant 0,
\end{equation}
where every $\psi_{j\ell}$ is a normalized product vector across
$\mathcal{I}$, such that
\begin{equation}
    \|T_j-T\|_1 \to 0.
\end{equation}
We need only this approximation; we do not assume that $T$ itself has an exact
finite or countable product-state decomposition.

Trace-norm convergence of the operators implies convergence of their Husimi
functions and their first two derivatives. In particular $q_{T_j}(r) \to q_T(r)
> 0$. Thus, for sufficiently large $j$, $q_{T_j}(r)>0$, and its logarithmic
Hessian is well defined. Write
\begin{equation}
    H_j = \nabla^2 \log q_{T_j}(r), \quad
    H = \nabla^2 \log q_{T}(r).
\end{equation}
Then $H_j \to H$. To see why, use
\begin{equation}
    \nabla^2 \log q = \frac{\nabla^2 q}{q} - \frac{\nabla q (\nabla q)^{\mathsf T}}{q^2}.
\end{equation}
The values and derivatives in this expression converge, and the denominators
converge to the strictly positive number $q_T(r)$.

Since each $T_j$ is a finite product mixture,
Proposition~\ref{prop:finite-product} supplies a complex symmetric,
$\mathcal{I}$-block-diagonal matrix $K_j$ satisfying $H_j \succcurlyeq -E +
\Realif(K_j)$. Equivalently
\begin{equation}\label{eq:KH}
    \Realif(K_j) \preccurlyeq E + H_j.
\end{equation}
We would now like to pass to the limit in this inequality. Before doing so, we
must show that the $K_j$ have a convergent subsequence.

Because $H_j\to H$, the matrices $I+H_j$ are bounded. Therefore there is a
single constant $c>0$, independent of $j$, such that $E+H_j \preccurlyeq c
E$ for all sufficiently large $j$. Combining this with Eq.~\eqref{eq:KH} we
derive
\begin{equation}\label{eq:RE}
    \Realif(K_j) \preccurlyeq cE.
\end{equation}
For an arbitrary symmetric matrix, an upper bound like this would not control
the operator norm: its negative eigenvalues could still become arbitrarily large
in magnitude. But the matrices $\mathcal R(K_j)$ have a special symmetry.
Lemma~\ref{lem:realification} shows that their eigenvalues occur in opposite
pairs: whenever $\lambda$ occurs, so does $-\lambda$. Consequently,
Eq.~\eqref{eq:RE} bounds both $|\lambda|\leqslant c$. Thus every eigenvalue lies
in $[-c,c]$, and
\begin{equation}
    -cE \preccurlyeq \Realif(K_j) \preccurlyeq cE.
\end{equation}
The same lemma gives
\begin{equation}
    \|\Realif(K_j)\| = \|K_j\|.
\end{equation}
Hence $\|K_j\| \leqslant c$. Each $K_j$ is an $n\times n$ matrix. It therefore
has only finitely many real and imaginary entries, and this inequality bounds
all of them. We can consequently choose a subsequence along which every entry
converges, $K_{j_k} \to K_T$. The limit retains the required structure. Indeed,
$(K_{j_k})_{ab} = (K_{j_k})_{ba}$ passes to the limit, so $K^{\mathsf T}_T=K_T$.
Likewise, every entry connecting different blocks of $\mathcal{I}$ was zero and
remains zero. Thus $K_T$ is still complex symmetric and $\mathcal{I}$-block
diagonal.

Finally, for every real vector $v$, inequality \eqref{eq:KH} says
\begin{equation}
    v^{\mathsf T}\Realif(K_{j_k})v \leqslant v^{\mathsf T}(E+H_{j_k})v.
\end{equation}
Taking the limit gives
\begin{equation}
    v^{\mathsf T}\Realif(K_T)v \leqslant v^{\mathsf T}(E+H)v.
\end{equation}
Because this holds for every $v$, we have $\Realif(K_T) \preccurlyeq E+H$.
Recalling the definition of $H$, we obtain
\begin{equation}
    \nabla^2 \log q_T(r) \succcurlyeq -E +\Realif(K_T).
\end{equation}
This proves the proposition.
\end{proof}

\appendix

\section{Construction 1}

The construction modifies $\tau_3$ only on the eight-dimensional subspace
\begin{equation}
\mathcal H_{01}=\operatorname{span}\{\ket{abc}:a,b,c\in\{0,1\}\}.
\end{equation}
Let $\mathcal{P}_{01}$ denote the orthogonal projection onto this subspace
\begin{equation}
    \mathcal{P}_{01} = (\proj{0}+\proj{1})^{\otimes 3}.
\end{equation}
The total weight of $\tau_3$ on $\mathcal H_{01}$ is
\begin{equation}
 p_{01}=\Tr(\mathcal{P}_{01}\tau_3)=\left(\frac34\right)^3=\frac{27}{64}.
\label{eq:t}
\end{equation}
The normalized restriction of $\tau_3$ to $\mathcal H_{01}$ is
\begin{equation}
D=\frac{1}{p_{01}}\mathcal{P}_{01}\tau_3\mathcal{P}_{01}
=\frac1{27}\operatorname{diag}(8,4,4,2,4,2,2,1),
\label{eq:D}
\end{equation}
where $D$ is understood to be extended by zero on $\mathcal H_{01}^{\perp}$, and throughout this section the ordered basis of $\mathcal H_{01}$ is
\begin{equation}
\ket{000},\ket{001},\ket{010},\ket{011},
\ket{100},\ket{101},\ket{110},\ket{111}.
\label{eq:basis}
\end{equation}
Thus
\begin{equation}
\tau_3=R+p_{01}D, \qquad
R=\tau_3-p_{01}D=\tau_3-\mathcal{P}_{01}\tau_3\mathcal{P}_{01}.
\label{eq:Rdef-1}
\end{equation}
The operator $R$ is positive, diagonal in the product Fock basis, fully separable, and
\begin{equation}
\Tr R=1-p_{01}=\frac{37}{64}.
\end{equation}

We next construct a normalized state $\rho_1$ supported on $\mathcal H_{01}$
which is separable across $1|23$ but NPT across the other two cuts.  Define
\begin{equation}
\ket{\pm}=\frac{\ket0\pm\ket1}{\sqrt2},
\qquad
\ket{\pm i}=\frac{\ket0\pm i\ket1}{\sqrt2},
\qquad
\ket{\Phi^+}_{23}=\frac{\ket{00}_{23}+\ket{11}_{23}}{\sqrt2}.
\end{equation}
For compactness, $\ket{u;vw}$ means $\ket u_1\otimes\ket v_2\otimes\ket w_3$, while
$\ket{0;\Phi^+}$ means $\ket0_1\otimes\ket{\Phi^+}_{23}$.  Put
\begin{equation}
\begin{split}
\rho_1 &=
\frac{2}{27}\proj{0;00}+\frac{2}{9}\proj{0;\Phi^+}
+\frac{2}{9}\proj{1;00}+\frac{1}{27}\proj{1;11}\\
&+\frac1{18}\Bigl(\proj{+;0-}+\proj{+;-0}+\proj{-;0+}+\proj{-;+0}\Bigr)\\
&+\frac1{18}\Bigl(\proj{+i;0,+i}+\proj{+i;+i,0}\Bigr)\\
&+\frac1{18}\Bigl(\proj{-i;0,-i}+\proj{-i;-i,0}\Bigr).
\label{eq:rho1decomp}
\end{split}
\end{equation}
The coefficients in \eqref{eq:rho1decomp} are nonnegative and sum to one.  More importantly, every pure state appearing in \eqref{eq:rho1decomp} is a product vector across the cut $1|23$.  Hence
\begin{equation}
\rho_1\in\Sep(1|23).
\label{eq:rho1sep}
\end{equation}
Notice that $\ket{0;\Phi^+}$ need not be a three-mode product vector: it is enough that it is a product across the indicated bipartition.

In the basis \eqref{eq:basis}, direct expansion of \eqref{eq:rho1decomp} gives
\begin{equation}
\rho_1=\frac1{54}
\begin{pmatrix}
16&0&0&6&0&-3&-3&0\\
0&3&0&0&0&0&0&0\\
0&0&3&0&0&0&0&0\\
6&0&0&6&0&0&0&0\\
0&0&0&0&18&0&0&0\\
-3&0&0&0&0&3&0&0\\
-3&0&0&0&0&0&3&0\\
0&0&0&0&0&0&0&2
\end{pmatrix}.
\label{eq:rho1matrix}
\end{equation}

Let $S$ be the unitary implementing the cyclic shift
\begin{equation}
S\ket{a,b,c}=\ket{c,a,b}.
\label{eq:cyclic}
\end{equation}
Define
\begin{equation}
\rho_2=S\rho_1S^{\dagger},
\qquad
\rho_3=S^2\rho_1S^{\dagger 2}.
\label{eq:rho23}
\end{equation}
The original mode $1$ is sent by $S$ to mode $2$, and by $S^2$ to mode $3$.  Therefore
\begin{equation}
\rho_2\in\Sep(2|13),
\qquad
\rho_3\in\Sep(3|12).
\label{eq:rho23sep}
\end{equation}

The reason for the particular coefficients in \eqref{eq:rho1decomp} is the exact cyclic cancellation
\begin{equation}
\frac{\rho_1+\rho_2+\rho_3}{3}=D.
\label{eq:cyclicaverage}
\end{equation}
This can be checked immediately from \eqref{eq:rho1matrix}.  For example, the three coherences
\[
\ket{000}\!\bra{011},\qquad
\ket{000}\!\bra{101},\qquad
\ket{000}\!\bra{110}
\]
appear in $\rho_1$ with coefficients
\[
\frac19,\qquad -\frac1{18},\qquad -\frac1{18},
\]
respectively.  Cyclic permutation rotates these three coefficients, so each off-diagonal matrix element vanishes in the average because
\[
\frac19-\frac1{18}-\frac1{18}=0.
\]
The diagonal entries of the same average are
\[
\frac1{27}(8,4,4,2,4,2,2,1),
\]
which proves \eqref{eq:cyclicaverage}.

We now reattach the common thermal remainder $R$.  Define
\begin{equation}
\sigma_j=R+p_{01}\rho_j,
\qquad j=1,2,3,
\label{eq:sigmadef}
\end{equation}
with $p_{01}=27/64$. Since $\Tr R=37/64$ and $\Tr\rho_j=1$, every $\sigma_j$ is
a normalized state. Moreover, $R$ is fully separable, so \eqref{eq:rho1sep} and
\eqref{eq:rho23sep} imply
\begin{equation}
\sigma_1\in\Sep(1|23),
\qquad
\sigma_2\in\Sep(2|13),
\qquad
\sigma_3\in\Sep(3|12).
\label{eq:sigmasep}
\end{equation}
Using \eqref{eq:Rdef-1} and \eqref{eq:cyclicaverage},
\begin{equation}
\frac{\sigma_1+\sigma_2+\sigma_3}{3}
=R+\frac{p_{01}}{3}(\rho_1+\rho_2+\rho_3)
=R+p_{01}D =\tau_3.
\label{eq:mainaverage}
\end{equation}
Thus the Gaussian state $\tau_3$ is an equal three-term mixture whose three
terms are separable across three different bipartitions.

The construction is stronger than merely assigning a different separable cut to
each component.  Each $\sigma_j$ is NPT across its other two bipartitions.
Consider first $\sigma_1$ and partial transposition on mode $2$.  The matrix
element
\begin{equation}
\frac19\ket{000}\!\bra{011}+\frac19\ket{011}\!\bra{000}
\end{equation}
in $\rho_1$ becomes
\begin{equation}
\frac19\ket{010}\!\bra{001}+\frac19\ket{001}\!\bra{010}
\end{equation}
after the partial transposition of the second mode $\mathsf{T}_2$.  Since $R$
has zero support on $\mathcal H_{01}$, the principal block of
$\sigma_1^{\mathsf{T}_2}$ on $\operatorname{span}\{\ket{001},\ket{010}\}$ is
exactly
\begin{equation}
\frac{27}{64}
\begin{pmatrix}
1/18&1/9\\
1/9&1/18
\end{pmatrix}
=
\begin{pmatrix}
3/128&3/64\\
3/64&3/128
\end{pmatrix}.
\label{eq:NPTblock}
\end{equation}
Its eigenvalues are $9/128$ and $-3/128$.  Hence
\begin{equation}
\sigma_1^{\mathsf{T}_2}\not\succcurlyeq 0,
\end{equation}
so $\sigma_1$ is entangled across $2|13$.  The state $\rho_1$, and hence
$\sigma_1$, is invariant under the exchange of modes $2$ and $3$, so the same
argument gives
\begin{equation}
\sigma_1^{\mathsf{T}_3}\not\succcurlyeq 0.
\end{equation}
Thus $\sigma_1$ is NPT across both cuts other than $1|23$.  By cyclic
permutation, the complete pattern is
\begin{equation}
\begin{array}{c|ccc}
&1|23&2|13&3|12\\
\hline
\sigma_1&\text{separable}&\text{NPT}&\text{NPT}\\
\sigma_2&\text{NPT}&\text{separable}&\text{NPT}\\
\sigma_3&\text{NPT}&\text{NPT}&\text{separable}.
\end{array}
\label{eq:cuttable}
\end{equation}

All three states $\sigma_j$ are centered.  Indeed, the difference $\rho_1-D$
contains only diagonal terms and coherences connecting Fock vectors which differ
in two modes, so every first moment vanishes; cyclic permutation gives the same
statement for $\rho_2-D$ and $\rho_3-D$.  Since $\tau_3$ is centered, so are the
$\sigma_j$.

For $\sigma_1$, a direct calculation gives, in the ordering
$r=(x_1,x_2,x_3,p_1,p_2,p_3)^T$,
\begin{equation}
\gamma_{\sigma_1}=\frac32E_6+\frac1{128}
\begin{pmatrix}
 8&-3&-3&0&0&0\\
-3&-4& 6&0&0&0\\
-3& 6&-4&0&0&0\\
0&0&0& 8& 3& 3\\
0&0&0& 3&-4&-6\\
0&0&0& 3&-6&-4
\end{pmatrix}.
\label{eq:gamma1compact}
\end{equation}
Let $V_{\rm ph}$ denote the corresponding cyclic permutation matrix on phase
space. Then
\begin{equation}
\gamma_{\sigma_2}=S_{\rm ph}\gamma_{\sigma_1}S_{\rm ph}^{\mathsf{T}},
\qquad
\gamma_{\sigma_3}=S_{\rm ph}^2\gamma_{\sigma_1}(S_{\rm ph}^{\mathsf{T}})^2,
\label{eq:gamma23}
\end{equation}
where $S_{\rm ph}$ is the $6\times6$ real permutation matrix implementing on
phase space the same cyclic permutation as $S$
\begin{equation}
    S_{\rm ph} = 
    \begin{pmatrix}
        P & 0 \\
        0 & P
    \end{pmatrix},
    \quad
    P = 
    \begin{pmatrix}
        0 & 0 & 1 \\
        1 & 0 & 0 \\
        0 & 1 & 0
    \end{pmatrix}.
\end{equation}
In particular
\begin{equation}
\gamma_{\sigma_1}\neq\gamma_{\tau_3},
\qquad
\gamma_{\sigma_2}\neq\gamma_{\tau_3},
\qquad
\gamma_{\sigma_3}\neq\gamma_{\tau_3},
\end{equation}
and the three component covariances are mutually different.  Since the three
states have the same zero mean and their average is $\tau_3$, covariance is
affine here, and therefore
\begin{equation}
\frac{\gamma_{\sigma_1}+\gamma_{\sigma_2}+\gamma_{\sigma_3}}{3}
=\gamma_{\tau_3}=\frac32E_6.
\label{eq:covavg}
\end{equation}

\section{Construction 2}

Define two positive operators by
\begin{equation}
\begin{split}
A_+ &=\frac1{64}\Big(6\proj{000}+8\proj{001}+8\proj{010}+4\proj{011}\\
&+\proj{111}+\ket{000}\!\bra{111}+\ket{111}\!\bra{000}\Big),\\
A_- &=\frac1{64}\Big(10\proj{000}+8\proj{100}+4\proj{101}+4\proj{110}\\
&+\proj{111}-\ket{000}\!\bra{111}-\ket{111}\!\bra{000}\Big).
\label{eq:Aminus}
\end{split}
\end{equation}
Their positivity is immediate except possibly on the span of $\ket{000}$ and
$\ket{111}$.  On that subspace the two matrices are, respectively,
\begin{equation}
\frac1{64}
\begin{pmatrix}
6&1\\
1&1
\end{pmatrix},
\qquad
\frac1{64}
\begin{pmatrix}
10&-1\\
-1&1
\end{pmatrix}.
\label{eq:2x2blocks}
\end{equation}
Their determinants are $5/64^2$ and $9/64^2$, and their traces are positive.  Hence
\begin{equation}
A_+\succcurlyeq 0,
\quad
A_-\succcurlyeq 0.
\label{eq:Apositive}
\end{equation}
Furthermore,
\begin{equation}
\Tr A_+=\Tr A_-=\frac{27}{64}.
\label{eq:Atrace}
\end{equation}

The key cancellation is
\begin{equation}
\frac{A_++A_-}{2}=p_{01}D.
\label{eq:Aaverage}
\end{equation}
Indeed, the opposite $\ket{000}\!\bra{111}$ coherences cancel, while the
averaged diagonal is
\[
\frac1{64}(8,4,4,2,4,2,2,1),
\]
which is exactly $p_{01}D$.

We now define two normalized states on the full three-mode Hilbert space:
\begin{equation}
\sigma_+=R+A_+,
\qquad
\sigma_-=R+A_-.
\label{eq:sigmadef2}
\end{equation}
We have
\[
\Tr\sigma_\pm=\frac{37}{64}+\frac{27}{64}=1,
\]
so $\sigma_\pm$ are density operators.  Equation~\eqref{eq:Aaverage} gives the exact equal-mixture identity
\begin{equation}
\tau_3=\frac12\sigma_++\frac12\sigma_-.
\label{eq:mainmixture}
\end{equation}
Thus a fully separable Gaussian state is the equal mixture of two states. It
remains to prove that both of those states are GME.

We first recall a simple matrix-element inequality for three-qubit states.  For
any positive operator $\omega$ supported on $(\mathbb C^2)^{\otimes3}$, write
\[
\omega_{abc,a'b'c'}=\bra{abc}\omega\ket{a'b'c'}.
\]
According to Ref.~\cite{GuhneSeevinck2010}, every biseparable positive operator $\omega$ satisfies
\begin{equation}
|\omega_{000,111}|
\leqslant
\sqrt{\omega_{011,011}\omega_{100,100}}
+
\sqrt{\omega_{101,101}\omega_{010,010}}
+
\sqrt{\omega_{110,110}\omega_{001,001}}.
\label{eq:GMEwitness}
\end{equation}
For $A_+$ we have
\begin{equation}
|(A_+)_{000,111}|=\frac1{64}>0,
\label{eq:Apluscoherence}
\end{equation}
while
\begin{equation}
(A_+)_{100,100}=(A_+)_{101,101}=(A_+)_{110,110}=0.
\label{eq:Apluszeros}
\end{equation}
Every term on the right-hand side of Eq.~\eqref{eq:GMEwitness} therefore vanishes.  Hence $A_+$ violates the biseparability inequality and is GME after normalization.

Likewise,
\begin{equation}
|(A_-)_{000,111}|=\frac1{64}>0,
\label{eq:Aminuscoherence}
\end{equation}
whereas
\begin{equation}
(A_-)_{001,001}=(A_-)_{010,010}=(A_-)_{011,011}=0.
\label{eq:Aminuszeros}
\end{equation}
Again the right-hand side of Eq.~\eqref{eq:GMEwitness} is zero, so the normalized version of $A_-$ is GME.

We now lift this conclusion from the finite-dimensional sector to the full states $\sigma_\pm$.  The operation
\begin{equation}
\rho\longmapsto \mathcal{P}_{01}\rho\mathcal{P}_{01}
\label{eq:localfilter}
\end{equation}
is a local filter.  Local filtering cannot create GME from a biseparable state:
a term separable across a given bipartition remains separable across the same
bipartition after local filtering, and a convex mixture of such filtered terms
remains biseparable.  But Eq.~\eqref{eq:sigmadef2} give
\begin{equation}
\mathcal{P}_{01}\sigma_\pm\mathcal{P}_{01} = A_\pm.
\label{eq:filtered}
\end{equation}
Conditioned on success, the filtered state is $A_\pm/\Tr A_\pm$, which is GME by the argument above.  Therefore the original states cannot be biseparable:
\begin{equation}
\sigma_+\ \text{and}\ \sigma_-\ \text{are both genuinely multipartite entangled.}
\label{eq:GMEconclusion}
\end{equation}
Combining this with Eq.~\eqref{eq:mainmixture} proves the announced decomposition.

\section{Technical results}
Here we prove some technical results which are needed for completeness of the
main part.

We first prove that trace-norm convergence of the operators implies convergence
of their Husimi functions and their first two derivatives. In fact, let
\begin{equation}
    P(r) = \ket{\alpha(r)}\bra{\alpha(r)},
\end{equation}
where $\alpha(r) = (x+ip)/\sqrt{2}$ for $r=(x, p)$. The map $r\mapsto P(r)$ is
$C^\infty$ in operator norm. Therefore, for every real multi-index $\beta$
with $|\beta|\leqslant 2$
\begin{equation}
    \frac{\partial^{|\beta|} q_T}{\partial r^\beta}(r) = 
    \Tr\left[T \frac{\partial^{|\beta|} P}{\partial r^\beta}(r)\right].
\end{equation}
Consequently
\begin{equation}
    \left|\frac{\partial^{|\beta|} q_{T_j}}{\partial r^\beta}(r) - 
    \frac{\partial^{|\beta|} q_T}{\partial r^\beta}(r)\right| \leqslant
    \|T_j - T\|_1 \left\|\frac{\partial^{|\beta|} P}{\partial r^\beta}(r)\right\|,
\end{equation}
where we use the inequality
\begin{equation}
    |\Tr(AB)| \leqslant \|A\|_1 \|B\|,
\end{equation}
so from the convergence $T_j \to T$ in the norm $\|\cdot\|_1$ follows
convergence of all partial derivatives. The norm $\|\cdot\| \equiv
\|\cdot\|_{\infty}$ is the standard operator norm given by
Eq.~\eqref{eq:st-norm}.

\begin{lemma}
In the definition of $\Sep(\mathfrak{I})$ we can omit the closure operation since
\begin{equation}
 \overline{\conv}\biggl(\bigcup_{\mathcal{I}\in\mathfrak{I}}\Sep(\mathcal{I})\biggr) = 
 \conv\biggl(\bigcup_{\mathcal{I}\in\mathfrak{I}}\Sep(\mathcal{I})\biggr).
\end{equation}    
\end{lemma}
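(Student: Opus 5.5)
We need to show that the algebraic convex hull $\conv\bigl(\bigcup_{\mathcal I\in\mathfrak I}\Sep(\mathcal I)\bigr)$ is already trace-norm closed. The plan is to use the finiteness of $\mathfrak I$ together with a compactness argument for positive trace-class operators.

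First we reduce the form of an element of the hull. Each $\Sep(\mathcal I)$ is convex, and $\mathfrak I$ is finite, say with $N$ elements. Hence every element of the algebraic hull can be written as $\sum_{\mathcal I\in\mathfrak I}p_{\mathcal I}\rho_{\mathcal I}$ with $\rho_{\mathcal I}\in\Sep(\mathcal I)$ and $(p_{\mathcal I})$ in the probability simplex. Equivalently, it is a sum $\sum_{\mathcal I}T_{\mathcal I}$ of positive operators, where each $T_{\mathcal I}$ is either zero or $\mathcal I$-separable.

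Now take a sequence $\rho^{(j)}=\sum_{\mathcal I}p^{(j)}_{\mathcal I}\rho^{(j)}_{\mathcal I}$ in the hull converging in trace norm to some state $\rho$. We must show that $\rho$ is again of this form.
\begin{enumerate}[label=(\roman*)]
\item Since the simplex is compact, we pass to a subsequence along which $p^{(j)}_{\mathcal I}\to p_{\mathcal I}$ for every $\mathcal I$.
\item We extract limits of the summands $T^{(j)}_{\mathcal I}=p^{(j)}_{\mathcal I}\rho^{(j)}_{\mathcal I}$. We have $0\preccurlyeq T^{(j)}_{\mathcal I}\preccurlyeq\rho^{(j)}$, and $\rho^{(j)}\to\rho$ in trace norm, so the family $\{\rho^{(j)}\}$ has uniformly small trace tails. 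Let $P_M$ be the projection onto Fock states with total photon number at most $M$. Then $\sup_j\Tr\bigl[(1-P_M)\rho^{(j)}\bigr]\to0$ as $M\to\infty$, and domination transfers this bound to the $T^{(j)}_{\mathcal I}$.
\item Sets of positive trace-class operators that are bounded in trace and have uniformly vanishing tails are relatively compact in trace norm. This is the standard criterion: weak-$*$ compactness plus tightness, with convergence of traces upgrading weak convergence to trace-norm convergence for positive operators. One can also argue by hand, via a diagonal argument on the finite-dimensional compressions $P_MT^{(j)}_{\mathcal I}P_M$ followed by a $3\epsilon$ estimate using the tails. Either way, along a further subsequence $T^{(j)}_{\mathcal I}\to T_{\mathcal I}$ in trace norm for every $\mathcal I$.
\end{enumerate}

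We then identify the limits. Summing the convergent summands gives $\sum_{\mathcal I}T_{\mathcal I}=\rho$. Each limit satisfies $\Tr T_{\mathcal I}=p_{\mathcal I}$. If $p_{\mathcal I}>0$, then $\rho^{(j)}_{\mathcal I}=T^{(j)}_{\mathcal I}/p^{(j)}_{\mathcal I}\to T_{\mathcal I}/p_{\mathcal I}$ in trace norm. Because $\Sep(\mathcal I)$ is trace-norm closed by definition, this limit lies in $\Sep(\mathcal I)$. If $p_{\mathcal I}=0$, then $\|T_{\mathcal I}\|_1=\Tr T_{\mathcal I}=0$, so that term simply drops out. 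Since $\sum_{\mathcal I}p_{\mathcal I}=1$, at least one weight is positive. Therefore $\rho=\sum_{\mathcal I}p_{\mathcal I}\,(T_{\mathcal I}/p_{\mathcal I})$, with the sum over those $\mathcal I$ with $p_{\mathcal I}>0$, is an exact element of the algebraic hull, and the hull is closed.

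The main obstacle is the compactness step (iii). In infinite dimensions bounded sets are not compact, so the argument hinges on the uniform tails inherited by domination from the convergent sequence $\rho^{(j)}$. If the abstract criterion is not cited, the tail estimate together with the finite-dimensional diagonal extraction must be written out carefully.
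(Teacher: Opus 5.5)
Your proof is correct and follows essentially the same route as the paper. Uniform trace tightness of the convergent sequence is transferred to each summand $T_{\mathcal I,j}$ by positivity, which gives relative compactness in trace norm. A common subsequence is then extracted over the finitely many partitions, and trace-norm closedness of each $\Sep(\mathcal I)$ identifies the normalized limits.

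The one detail the paper writes out, and your ``by hand'' route leaves implicit, is the estimate $\|T-PTP\|_1\leqslant 2\sqrt{\Tr T\,\Tr[(E-P)T]}$. This estimate turns the tail bound into a uniform trace-norm bound that also covers the off-diagonal blocks in your $3\epsilon$ argument.
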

\begin{proof}
Let $\rho_j \to \rho$ in trace norm, where each $\rho_j$ is in the algebraic
convex hull of the fixed-partition separable sets
\begin{equation}
    \rho_j = \sum_{\mathcal{I}\in\mathfrak{I}} T_{\mathcal{I}, j} \equiv 
    \sum_{\mathcal{I}\in\mathfrak{I}} p_{\mathcal{I}, j} \rho_{\mathcal{I}, j},
\end{equation}
where $p_{\mathcal{I}, j}$ are nonnegative and sum up to 1 (over $\mathcal{I}$
for any fixed $j$) and $\rho_{\mathcal{I}, j} \in \Sep(\mathcal{I})$.

Choose finite-rank projections $P_N\uparrow E$. Trace-norm convergence implies
uniform trace tightness:
\begin{equation}
    \varepsilon_N \equiv \sup_j\Tr[(E-P_N)\rho_j] \to 0.
\end{equation}
To verify this, first control all sufficiently large $j$ by their trace-norm
distance from $\rho$, then choose $N$ large enough to control $\rho$ and
the finitely many remaining $\rho_j$. Positivity gives
\begin{equation}
    \Tr[(E-P_N)T_{\mathcal{I}, j}] \leqslant \varepsilon_N.
\end{equation}
For any positive trace-class $T$ and orthogonal projection $P$
\begin{equation}
    \|T - PTP\|_1 \leqslant 2 \sqrt{\Tr T \Tr[(E-P)T]}.
\end{equation}
This follows by writing
\begin{equation}
    T - PTP = (E-P)T + PT(E-P)
\end{equation}
and factoring through $\sqrt{T}$, followed by the Hilbert-Schmidt
Cauchy-Schwarz inequality. Consequently
\begin{equation}
    \|T_{\mathcal{I}, j}-P_N T_{\mathcal{I}, j} P_N\|_1 \leqslant 2\sqrt{\varepsilon_N}.
\end{equation}
For fixed $N$, the compressed operators lie in a bounded subset of a
finite-dimensional space. The uniform approximation above therefore makes each
family $\{T_{\mathcal{I},j}\}_j$ relatively compact in trace norm.

Because $\mathfrak{I}$ is finite, take a common subsequence along which
$T_{\mathcal{I}, j} \to T_{\mathcal{I}}$ for every $\mathcal{I}$. Then
\begin{equation}
    T_{\mathcal{I}} \succcurlyeq 0, \quad 
    \rho = \sum_{\mathcal{I}\in\mathfrak{I}} T_{\mathcal{I}}.
\end{equation}
Put $p_\mathcal{I}=\operatorname{Tr}T_\mathcal{I}$. If $p_\mathcal{I}>0$,
normalization and trace-norm closedness of $\operatorname{Sep}(\mathcal{I})$
give
\begin{equation}
    \frac{T_{\mathcal{I}}}{p_{\mathcal{I}}} \in \Sep(\mathcal{I}).
\end{equation}
If $p_\mathcal{I}=0$, positivity implies $T_\mathcal{I}=0$. Thus
\begin{equation}
    \rho = \sum_{\mathcal{I}\in\mathfrak{I}} p_\mathcal{I} \rho_\mathcal{I}, \quad
    \sum_{\mathcal{I}\in\mathfrak{I}} p_\mathcal{I} = 1, \quad
    \rho_\mathcal{I} \in \Sep(\mathcal{I}),
\end{equation}
which completes the proof.
\end{proof}

In the main text we used the construction 
\begin{equation}
    \xi = \mu + \sqrt{C}Z, \quad
    C = \gamma - \gamma_*\succcurlyeq 0,
\end{equation}
and the integral over the probability distribution $d\nu(\xi)$ of the random
vector $\xi$. It is not required that distribution to have a density. Here
\begin{equation}
    Z = (Z_1, \ldots, Z_{2n})^{\mathsf T},
\end{equation}
where the components $Z_j$ are iid random variables with distribution
$\mathcal{N}(0, 1)$. Thus
\begin{equation}
    \mathbb{E}[Z] = 0, \quad
    \mathbb{E}[ZZ^{\mathsf T}] = E.
\end{equation}
The mean and covariance of $\xi$ follow directly
\begin{equation}
\begin{split}
    \mathbb{E}[\xi] &= \mu+\sqrt{C}\mathbb{E}[Z] = \mu \\
    \mathrm{Cov}(\xi) &= \mathbb{E}[(\xi-\mu)(\xi-\mu)^{\mathsf T}] = 
    \sqrt{C}\mathbb{E}[ZZ^{\mathsf T}]\sqrt{C}^{\mathsf T} = C.
\end{split}
\end{equation}
An affine transformation of independent standard normal
variables is the general construction of a multivariate Gaussian random vector,
including a degenerate one. Notice that $Z$ and $\xi$ are classical random
vectors. In particular, $\sqrt{C}$ is acting on ordinary random numbers, not
implementing a transformation of quantum quadrature operators.

Let $g$ denote standard Gaussian probability measure on $\R^{2n}$:
\begin{equation}
    dg(z) = \frac{1}{(2\pi)^n} e^{-\|z\|^2/2}dz.
\end{equation}
Then for any bounded measurable scalar function F
\begin{equation}
    \int_{\R^{2n}} F(\xi)\, d\nu(\xi) = 
    \mathbb{E}[F(\xi)] = 
    \frac{1}{(2\pi)^n} \int_{\R^{2n}} F\left(\mu+\sqrt{C}z\right)e^{-\|z\|^2/2} \, dz.
\end{equation}
This identity is the operational meaning of $d\nu(\xi)$: average $F$ over
the displacements produced by sampling $Z$. There is no inverse or determinant
of $C$ in this formula, so it works unchanged when $C$ is singular. It is
not an invertible change-of-variables argument; it is the definition of the
distribution induced by a random variable.

If $C$ is positive-definite and thus non-degenerate, the measure $d\nu(\xi)$ has
a density
\begin{equation}
    d\nu(\xi) = \frac{1}{(2\pi)^n\sqrt{\det C}}
    \exp\left[-\tfrac{1}{2}(\xi-\mu)^{\mathsf T}C^{-1}(\xi-\mu)\right]d\xi.
\end{equation}
The measure integral becomes an ordinary integral weighted by this Gaussian
density. The density formula requires positive definiteness, whereas the affine
construction above does not. At the extreme $C=0$, the displacement is
deterministic:
\begin{equation}
    \xi = \mu, \quad \nu = \delta_\mu, \quad
    \int_{\R^{2n}} F(\xi) \, d\nu(\xi) = F(\mu).
\end{equation}

In the main part we define
\begin{equation}
  \tilde{\rho} = \int D(\xi)\rho_* D^\dagger(\xi) \, d\nu(\xi),
\end{equation}
where $\nu$ is precisely the probability measure just constructed and
\begin{equation}
    D(\xi) = e^{-i\xi^{\mathsf T}\Omega r}.
\end{equation}
These operators displace the operators $r$
\begin{equation}
    D(\xi)^\dagger r D(\xi) = r + \xi.
\end{equation}
Put
\begin{equation}
    \sigma(\xi) = D(\xi)\rho_* D^\dagger(\xi).
\end{equation}
For every realized displacement, $\sigma(\xi)$ is a density operator. The
integral is its classical average $\tilde{\rho} = \mathbb{E}[\sigma(\xi)]$. We
can now write Eq.~\eqref{eq:rho-tilde} explicitly as
\begin{equation}
    \tilde{\rho} = \frac{1}{(2\pi)^n}
    \int_{\R^{2n}} D\left(\mu+\sqrt{C}z\right)\rho_*D\left(\mu+\sqrt{C}z\right)^\dagger
    e^{-\|z\|^2/2} \, dz.
\end{equation}
This formula is valid for both singular and nonsingular $C$.

Fixed displacement gives
\begin{equation}
    \chi_{\sigma(\xi)}(t) = e^{it^{\mathsf T}\xi} \chi_{\rho_*}(t).
\end{equation}
Averaging yields
\begin{equation}
    \chi_{\tilde{\rho}}(t) = \chi_{\rho_*}(t) 
    \int_{\R^{2n}} e^{it^{\mathsf T}\xi} \, d\nu(\xi).
\end{equation}
The last integral can be computed as
\begin{equation}
\begin{split}
    \int_{\R^{2n}} e^{it^{\mathsf T}\xi} \, d\nu(\xi) &= 
    \mathbb{E}[e^{it^{\mathsf T}\xi}] = 
    e^{it^{\mathsf T}\mu} \mathbb{E}[e^{i(\sqrt{C}t)^{\mathsf T}Z}] \\
    &= e^{it^{\mathsf T}\mu} \exp\left(-\tfrac{1}{2}\|\sqrt{C}t\|^2\right) = 
    \exp\left(it^{\mathsf T}\mu - \tfrac{1}{2}t^{\mathsf T}C t\right).
\end{split}
\end{equation}


\begin{thebibliography}{1}

\bibitem{Leskovjanova2026}
Olga Leskovjanov{\'a}, Kl{\'a}ra Baksov{\'a}, Jan Provazn{\'i}k, Ladislav Mi{\v{s}}ta, Jr., and Nicolai Friis.
\newblock On the existence of fully inseparable biseparable {Gaussian} states, 2026.
\newblock arXiv:2605.28404v1.

\bibitem{Baksova2025}
Kl{\'a}ra Baksov{\'a}, Olga Leskovjanov{\'a}, Ladislav Mi{\v{s}}ta, Jr., Elizabeth Agudelo, and Nicolai Friis.
\newblock Multi-copy activation of genuine multipartite entanglement in continuous-variable systems.
\newblock {\em Quantum}, 9:1699, 2025.
\newblock Also arXiv:2312.16570v5.

\bibitem{WernerWolf2001}
R.~F. Werner and M.~M. Wolf.
\newblock Bound entangled {Gaussian} states.
\newblock {\em Physical Review Letters}, 86:3658, 2001.

\bibitem{Giedke2001}
G.~Giedke, B.~Kraus, M.~Lewenstein, and J.~I. Cirac.
\newblock Separability criterion for all bipartite {Gaussian} states.
\newblock {\em Physical Review Letters}, 87:167904, 2001.

\bibitem{Holevo2005}
A.~S. Holevo, M.~E. Shirokov, and R.~F. Werner.
\newblock Separability and entanglement-breaking in infinite dimensions, 2005.
\newblock arXiv:quant-ph/0504204v1.

\bibitem{Yang2026GaussianRigidity}
Yu~Yang, Yuhang Wang, Chunxiao Du, Shikun Zhang, Zheng Qin, Rui Li, Wenxiu Li, Hao Zhang, and Zhisong Xiao.
\newblock Full inseparability and genuine multipartite entanglement coincide for finite-mode gaussian states.
\newblock {\em arXiv preprint arXiv:2609.10984}, 2026.

\bibitem{Carvacho2017GHZGeometry}
Gonzalo Carvacho, Francesco Graffitti, Vincenzo D'Ambrosio, Beatrix~C. Hiesmayr, and Fabio Sciarrino.
\newblock Experimental investigation on the geometry of {GHZ} states.
\newblock {\em Scientific Reports}, 7(1):13265, 2017.

\bibitem{GuhneSeevinck2010}
Otfried G{\"u}hne and Michael Seevinck.
\newblock Separability criteria for genuine multiparticle entanglement.
\newblock {\em New Journal of Physics}, 12(5):053002, 2010.

\end{thebibliography}
\end{document}